\newcommand{\lv}[1]{#1}
\newcommand{\sv}[1]{}
\newcommand{\rbs}{\textsc{Red-Blue Separation}\xspace}
\newcommand{\hvrbs}{\textsc{Axis-Parallel Red-Blue Separation}\xspace}
\def\shs{\textsc{Structured 2-Track Hitting Set}\xspace}
\def\shss{\textsc{S2-THS}\xspace}
\def\wone{$W[1]$\xspace}
\tikzstyle{rp}=[circle,fill=red,inner sep=-0.05cm]
\tikzstyle{bp}=[circle,fill=blue,inner sep=-0.05cm]
\tikzstyle{rp2}=[circle,fill=red,inner sep=-0.03cm]
\tikzstyle{bp2}=[circle,fill=blue,inner sep=-0.03cm]
\title{On the Parameterized Complexity of Red-Blue Points Separation\footnote{Research partially supported by EPSRC grant EP/N029143/1.}
}
\author[1]{\'Edouard Bonnet}
\author[1]{Panos Giannopoulos}
\author[2]{Michael Lampis}
\affil[1]{Middlesex University, Department of Computer Science, London, UK \texttt{edouard.bonnet@dauphine.fr}, \texttt{p.giannopoulos@mdx.ac.uk}}
\affil[2]{Universit\'{e} Paris-Dauphine, PSL Research University, CNRS, LAMSADE, Paris, France \texttt{michail.lampis@dauphine.fr}}		
\authorrunning{\'E. Bonnet, P. Giannopoulos, M. Lampis}
\subjclass{F.2.2 Nonnumerical Algorithms and Problems}
\keywords{red-blue points separation, geometric problem, W[1]-hardness, FPT algorithm, ETH-based lower bound}
\begin{document}

\maketitle

\begin{abstract}
We study the following geometric separation problem: 
  Given a set $\mathcal R$ of red points and a set $\mathcal B$ of blue points in the plane, find a minimum-size set of lines that separate $\mathcal R$ from $\mathcal B$.
  %Formally, we want that no cell of this line arrangement contains simultaneously at least one red point and at least one blue point.
  We show that, in its full generality, parameterized by the number of lines $k$ in the solution, the problem is unlikely to be solvable significantly faster than the brute-force $n^{O(k)}$-time algorithm, where $n$ is the total number of points.
  Indeed, we show that an algorithm running in time $f(k)n^{o(k/ \log k)}$, for any computable function $f$, would disprove ETH.
  Our reduction crucially relies on selecting lines from a set with a large number of different slopes (i.e., this number is not a function of $k$).

  Conjecturing that the problem variant where the lines are required to be axis-parallel is FPT in the number of lines, we show the following preliminary result. 
Separating  $\mathcal R$ from $\mathcal B$ with a minimum-size set of axis-parallel lines is FPT in the size of either set, and can be solved in time $O^*(9^{|\mathcal B|})$ (assuming that $\mathcal B$ is the smallest set).
\end{abstract}

\section{Introduction}\label{sec:intro}

We study the parameterized complexity of the following \rbs problem: 
Given a set $\mathcal R$ of red points and a set $\mathcal B$ of blue points in the plane and a positive integer $k$, find a set of at most $k$ lines that together separate $\mathcal R$ from $\mathcal B$ (or report
that such a set does not exist). Separation here means that each cell in the arrangement
induced by the lines in the solution is either monochromatic, i.e., contains points of one color only, or empty. Equivalently, $\mathcal R$ is separated from $\mathcal B$ if every straight-line segment with one endpoint in $\mathcal R$ and the other one in $\mathcal B$ is intersected by at least one line in the solution. Note here that we opt for \emph{strict} separation that is, no point in $\mathcal R \cup \mathcal B$ is on a separating line. Let $n := |\mathcal R \cup \mathcal B| $.
%The points are given by rational coordinates and the input length is the number of digits necessary to represent all the points. (\textbf{P: Do we really need this. What is $n$ exactly ?})

The variant where the separating lines sought must be axis-parallel will be simply referred to as \hvrbs.

Apart from being interesting in its own right, \rbs is also directly motivated by the problem of
univariate discretization of continuous variables in the context of machine learning \cite{FI93, KE07}. 
For example, its two-dimensional
version models problem instances with decision tables of two real-valued attributes and a binary
decision function. The lines to be found represent cut points determining a partition of the
values into intervals and one opts for a minimum-size set of cuts that is consistent with the given decision table.
The problem is also known as \emph{minimum linear classification}; see \cite{LDJXZ06} for an application in signal processing.
For the case where $k=1$ and $k=2$, \rbs is solvable in $O(n)$ and $O(n\log n)$ time respectively \cite{HMRS04}.
When $k$ is part of the input, it is known to be NP-hard \cite{Meg88} and APX-hard \cite{CDKW05} even for the axis-parallel variant. The latter also admits a $2$-approximation \cite{CDKW05}.

\medskip
\noindent
\textbf{Results.}  We first show that \rbs is W[1]-hard in the solution size $k$ and that it cannot be solved in $f(k)n^{o(k/ \log k)}$ time (for any computable function $f$) unless ETH fails. Our reduction is from \shs, see Section~\ref{sec:hardness}, which has been recently used for showing hardness for another classical geometric optimization problem \cite{BonnetM16}.
Then, in Section~\ref{sec:fpt}, we show that \hvrbs is FPT in the size of either of $\mathcal R$ and $\mathcal B$. Our algorithm is simple and is based on reducing the problem to $9^{|\mathcal B|+2}$ instances of \textsc{2-SAT} (assuming, w.l.o.g., that $\mathcal B$ is the smallest set).  

\medskip
\noindent
\textbf{Related work.}
The following monochromatic points separation problem has also been studied:  Given a set of points in the plane, find a smallest set of lines that separates every point from every other point in the set (i.e., each cell in the induced arrangement must contain at most one point).
It has been shown to be NP-hard \cite{FMP91}, APX-hard \cite{CDKW05} and, in the axis-parallel case, to admit a $2$-approximation \cite{CDKW05}.
Very recently, the problem has been also shown to admit an $\text{OPT} \log \text{OPT}$-approximation \cite{HPJ17}.
Note here that it is trivially FPT in the number of lines, as the number of cells in the arrangement of $k$ lines is at most $\Theta(k^2)$. For results on several other related separation problems, see \cite{HMRS04, DHMS01}.

%The \hvrbs problem is the same except the lines should be horizontal or vertical.

%Those two problems are known to be NP-hard.
%Their parameterized complexity with respect to the number of lines is an intriguing open question.

\lv{
  
\lv{
\section{Preliminaries}
}
\lv{
For positive integers $x$, $y$, let $[x]$ be the set of integers between 1 and $x$, and $[x,y]$ the set of integers between $x$ and $y$.

%\medskip
}

%\noindent
%\textbf{The Structured 2-Track Hitting Set problem.}
\sv{For positive integers $x$, $y$, let $[x]$ be the set of integers between 1 and $x$, and $[x,y]$ the set of integers between $x$ and $y$.
}
%\textbf{The Structured 2-Track Hitting Set problem.}
%\sv{For any positive integer $x$, we denote by $[x]$ the set of integers between 1 and $x$, and by $[x,y]$ the set of integers between $x$ and $y$.
  %}
For a totally ordered (finite) set $X$, an \emph{$X$-interval} is any subset of $X$ of consecutive elements. 
In the \textsc{2-Track Hitting Set} problem, the input consists of an integer $k$, two totally ordered ground sets $A$ and $B$ of the same cardinality, and two sets $\mathcal S_A$ of $A$-intervals and $\mathcal S_B$ of $B$-intervals.
The elements of $A$ and $B$ are in one-to-one correspondence $\phi: A \rightarrow B$ and each pair $(a,\phi(a))$ is called a \emph{$2$-element}. 
The goal is to decide if there is a set $S$ of $k$ $2$-elements such that the first projection of $S$ is a hitting set of $\mathcal S_A$, and the second projection of $S$ is a hitting set of $\mathcal S_B$.
We will refer to the interval systems $(A,\mathcal S_A)$ and $(B,\mathcal S_B)$ as track A and track B.

\shs (\shss for short) is the same problem with color classes over the $2$-elements and a restriction on the one-to-one mapping $\phi$; see Figure~\ref{fig:permutations} for an illustration. 
Given two integers $k$ and $t$, $A$ is partitioned into $(C_1,C_2,\ldots,C_k)$ where $C_j=\{a^j_1, a^j_2, \ldots, a^j_t\}$ for each $j \in [k]$.
$A$ is ordered: $a^1_1, a^1_2, \ldots, a^1_t, a^2_1, a^2_2, \ldots, a^2_t, \ldots, a^k_1, a^k_2, \ldots, a^k_t$.
We define $C'_j:=\phi(C_j)$ and $b^j_i := \phi(a^j_i)$ for all $i \in [t]$ and $j \in [k]$.
We now impose that $\phi$ is such that, for each $j \in [k]$, the set $C'_j$ is a $B$-interval.
That is, $B$ is ordered: $C'_{\sigma(1)}, C'_{\sigma(2)}, \ldots, C'_{\sigma(k)}$ for some permutation on $[k]$, $\sigma \in \mathfrak S_k$.
For each $j \in [k]$, the order of the elements within $C'_j$ can be described by a permutation $\sigma_j \in \mathfrak S_t$ such that the ordering of $C'_j$ is: $b^j_{\sigma_j(1)}, b^j_{\sigma_j(2)}, \ldots, b^j_{\sigma_j(t)}$.
In what follows, it will be convenient to see an instance of \shss as a tuple $\mathcal I=(k \in \mathbb{N},t \in \mathbb{N}, \sigma \in \mathfrak S_k, \sigma_1 \in \mathfrak S_t, \ldots, \sigma_k \in \mathfrak S_t, \mathcal S_A, \mathcal S_B)$, where $\mathcal S_A$ is a set of $A$-intervals and $\mathcal S_B$ is a set of $B$-intervals.
We denote by $[a^j_i,a^{j'}_{i'}]$ (resp. $[b^j_i,b^{j'}_{i'}]$) all the elements $a \in A$ (resp. $b \in B$) such that $a^j_i \leq_A a \leq_A a^{j'}_{i'}$ (resp. $b^j_i \leq_B b \leq_B b^{j'}_{i'}$).

\begin{figure}[htbp]
\centering
\begin{tikzpicture}

\definecolor{col1}{rgb}{1,0,0}
\definecolor{col2}{rgb}{0,1,0}
\definecolor{col3}{rgb}{0,0,1}
\definecolor{col4}{rgb}{1,1,0}
\def\p{-2}
\def\e{0.5}

\def\t{6}
\def\k{4}
\foreach \j [count = \ja from 0] in {1,...,\k}{
\foreach \i in {1,...,\t}{
\node (a\j\i) at (\e * \i + \e * \ja * \t, 0) {$a^\j_\i$} ;
}
\node[draw, fill=col\j, opacity=0.3, rectangle, rounded corners, thick, fit=(a\j1) (a\j\t),label=above:$C_\j$,inner sep=-0.05cm] (c\j) {} ;
}
\node[rectangle, rounded corners, thick, fit=(c1) (c\k),label=north:$A$,inner sep=0.4cm] (A) {} ;

\node (b34) at (\e, \p) {$b^3_4$} ;
\node (b32) at (2*\e, \p) {$b^3_2$} ;
\node (b33) at (3*\e, \p) {$b^3_3$} ;
\node (b36) at (4*\e, \p) {$b^3_6$} ;
\node (b31) at (5*\e, \p) {$b^3_1$} ;
\node (b35) at (6*\e, \p) {$b^3_5$} ;
\node[draw, fill=col3, opacity=0.3, rectangle, rounded corners, thick, fit=(b34) (b35),label=below:$C'_3$,inner sep=-0.05cm] (cp3) {} ;

\begin{scope}[xshift=6*\e cm]
\node (b12) at (\e, \p) {$b^1_2$} ;
\node (b14) at (2*\e, \p) {$b^1_4$} ;
\node (b11) at (3*\e, \p) {$b^1_1$} ;
\node (b15) at (4*\e, \p) {$b^1_5$} ;
\node (b16) at (5*\e, \p) {$b^1_6$} ;
\node (b13) at (6*\e, \p) {$b^1_3$} ;
\node[draw, fill=col1, opacity=0.3, rectangle, rounded corners, thick, fit=(b12) (b13),label=below:$C'_1$,inner sep=-0.05cm] (cp1) {} ;
\end{scope}

\begin{scope}[xshift=12*\e cm]
\node (b43) at (\e, \p) {$b^4_3$} ;
\node (b46) at (2*\e, \p) {$b^4_6$} ;
\node (b45) at (3*\e, \p) {$b^4_5$} ;
\node (b42) at (4*\e, \p) {$b^4_2$} ;
\node (b41) at (5*\e, \p) {$b^4_1$} ;
\node (b44) at (6*\e, \p) {$b^1_4$} ;
\node[draw, fill=col4, opacity=0.3, rectangle, rounded corners, thick, fit=(b43) (b44),label=below:$C'_4$,inner sep=-0.05cm] (cp4) {} ;
\end{scope}

\begin{scope}[xshift=18*\e cm]
\node (b21) at (\e, \p) {$b^2_1$} ;
\node (b25) at (2*\e, \p) {$b^2_5$} ;
\node (b22) at (3*\e, \p) {$b^2_2$} ;
\node (b24) at (4*\e, \p) {$b^2_4$} ;
\node (b26) at (5*\e, \p) {$b^2_6$} ;
\node (b23) at (6*\e, \p) {$b^2_3$} ;
\node[draw, fill=col2, opacity=0.3, rectangle, rounded corners, thick, fit=(b21) (b23),label=below:$C'_2$,inner sep=-0.05cm] (cp2) {} ;
\end{scope}

\node[rectangle, rounded corners, thick, fit=(cp3) (cp2),label=south:$B$,inner sep=0.4cm] (B) {} ;

\foreach \j in {1,...,\k}{
\draw[very thick,opacity=0.8] (c\j.south) -- (cp\j.north) ; 
}

\foreach \i in {1,...,\t}{
\draw (a1\i.south) -- (b1\i.north) ; 
}

\node (s) at (6.25,-1) {$\sigma$} ;
\node (s1) at (1.5,-1) {$\sigma_1$} ;

\node (oa) at (-0.1,0) {$\leq_A:$} ;
\node (ob) at (-0.1,-2) {$\leq_B:$} ;

\end{tikzpicture}
\caption{An illustration of the $k+1$ permutations $\sigma \in \mathfrak S_k$, $\sigma_1 \in \mathfrak S_t$, \dots, $\sigma_k \in \mathfrak S_t$ of an instance of \shs, with $k=4$ and $t=6$.}
\label{fig:permutations}
\end{figure}
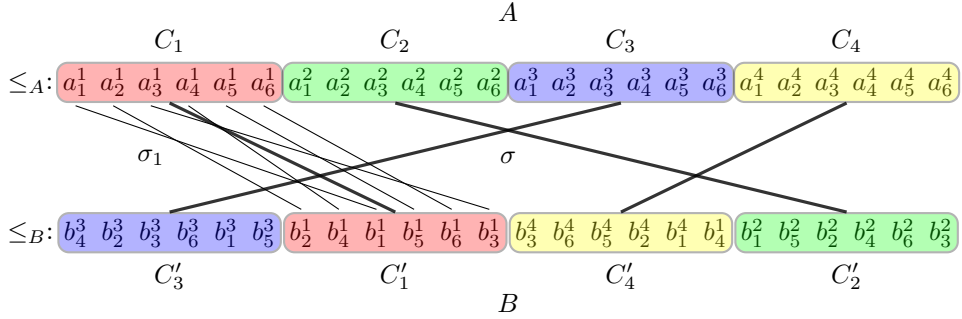

\lv{
\lv{\medskip}
\noindent
\textbf{ETH-based lower bounds.}
The \emph{Exponential Time Hypothesis} (ETH) is a conjecture by Impagliazzo et al.~\cite{impagliazzo1999complexity} asserting that there is no $2^{o(n)}$-time algorithm for \textsc{3-SAT} on instances with $n$ variables.

The \textsc{Multicolored Subgraph Isomorphism} problem can be defined in the following way.
One is given a graph with $n$ vertices partitioned into $l$ color classes $V_1, \ldots, V_l$ such that only $k$ of the ${l \choose 2}$ sets $E_{ij}=E(V_i,V_j)$ are non empty.
The goal is to pick one vertex in each color class so that the selected vertices induce $k$ edges. 
Observe that $l$ corresponds to the number of vertices of the pattern graph.
The technique of color coding and a result by Marx imply that:
\begin{theorem}[\cite{Marx10}]\label{thm:marx10}
\textsc{Multicolored Subgraph Isomorphism} cannot be solved in time $f(k)\, n^{o(k / \log k)}$ where $k$ is the number of edges of the solution and $f$ any computable function, unless the ETH fails.
\end{theorem}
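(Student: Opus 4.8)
The plan is to derive the bound from Marx's treewidth-based lower bound for \si, instantiated on a carefully chosen family of sparse patterns, and then to carry the hardness over to the multicolored variant by the color-coding technique.

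First I would recall Marx's result \cite{Marx10} in the following form: for every recursively enumerable class $\mathcal H$ of graphs of unbounded treewidth, \si with the pattern $H$ restricted to $\mathcal H$ cannot be solved in time $g(H)\, n^{o(\mathrm{tw}(H)/\log \mathrm{tw}(H))}$ for any computable $g$, unless the ETH fails, where $n$ denotes the number of vertices of the host graph. The crucial choice is to take $\mathcal H$ so that $\mathrm{tw}(H)$ is \emph{linear} in the number of edges of $H$: letting $\mathcal H$ be an explicit (hence recursively enumerable) family of $3$-regular expanders, one has $\mathrm{tw}(H)=\Theta(|V(H)|)=\Theta(|E(H)|)$, so that with $k := |E(H)|$ the lower bound reads $g(k)\, n^{o(k/\log k)}$.

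Next I would reduce \si with patterns in $\mathcal H$ to \msi by color coding. Given a host graph $G$ on $n$ vertices and a pattern $H\in\mathcal H$ with $l := |V(H)| \le 2k$ vertices, build a perfect hash family $\mathcal F$ of colorings $V(G)\to[l]$, of size $h(k)\log n$ for some computable $h$, such that every $l$-subset of $V(G)$ is rainbow-colored by a member of $\mathcal F$. For each $f\in\mathcal F$ and each bijection $\pi:[l]\to V(H)$ produce the \msi instance whose color classes are $f^{-1}(1),\dots,f^{-1}(l)$, whose relevant pairs of classes are exactly the $\{i,j\}$ with $\{\pi(i),\pi(j)\}\in E(H)$, and which asks to pick one vertex per class so as to realize all $k=|E(H)|$ of these pairs. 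This produces $|\mathcal F|\cdot l! = f_1(k)\log n$ instances, each on at most $n$ vertices and with solution size exactly $k$, and $G$ contains a copy of $H$ if and only if at least one of them is a yes-instance. Hence an $f(k)\, n^{o(k/\log k)}$ algorithm for \msi would decide \si on $\mathcal H$ in time $f_2(k)\, n^{o(k/\log k)}$ (since $\mathrm{tw}(H)/\log\mathrm{tw}(H)=\Theta(k/\log k)$ and the extra factors depend only on $k$ or are $\log n$), contradicting the previous paragraph.

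The main difficulty I anticipate is keeping the parameter under control, rather than the reduction itself. One genuinely needs a pattern class whose treewidth is linear in its number of edges, which forces the bounded-degree (expander) choice: a family such as paths, cycles or planar grids has far more edges than treewidth and would not yield the claimed $k/\log k$ exponent. Likewise, the color-coding step must be derandomized through a perfect hash family, so that the $l!$ bijections and the size of $\mathcal F$ contribute only a factor depending on $k$ (together with a $\log n$) and do not leak into the $n^{o(k/\log k)}$ term. Finally, no gadgeting is needed to pass from \si to \msi once a coloring is fixed: the instance produced is literally the partitioned version of the original subgraph-isomorphism question.
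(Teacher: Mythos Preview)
The paper does not actually prove this theorem: it is stated with the citation \cite{Marx10} and only the one-line remark ``the technique of color coding and a result by Marx imply that'' precedes it. Your proposal correctly fleshes out precisely that hint---Marx's treewidth lower bound instantiated on a bounded-degree expander family (so that $\mathrm{tw}(H)=\Theta(k)$), followed by a derandomized color-coding reduction from \si to \msi---and this is the standard derivation. So your approach is consistent with, and more detailed than, what the paper offers.

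One small point worth tightening: when you absorb the $\log n$ factor from the hash-family size into the running time, note that $\log n \le n$, so $h(k)\,\log n \cdot f(k)\,n^{g(k)} \le f'(k)\,n^{g(k)+1}$, and $g(k)+1$ is still $o(k/\log k)$ since the additive constant is negligible as $k\to\infty$. This is routine but should be stated explicitly, since the lower-bound statement is asymptotic in $k$.
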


Bonnet and Miltzow showed the following conditional lower bound for \shs by a reduction from \textsc{Multicolored Subgraph Isomorphism} linearly preserving the parameter:

\begin{theorem}[\cite{BonnetM16}]\label{thm:shs-lowerbound}
   \shs is \wone-hard and, unless the ETH fails, cannot be solved in time $f(k)n^{o(k/ \log k)}$ for any computable function $f$.
\end{theorem}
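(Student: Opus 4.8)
The plan is to give a polynomial-time reduction from \msi that blows up the parameter only by a constant factor. Since \msi (parameterized by the number $k$ of edges of the solution) is \wone-hard, as it contains \mkc as the special case in which the pattern is a clique, and since, by \Cref{thm:marx10}, it admits no $f(k)\,n^{o(k/\log k)}$ algorithm unless the ETH fails, such a reduction proves both claims at once: if the produced \shss instance has $k' = \Theta(k)$ color classes then $k'/\log k' = \Theta(k/\log k)$, and the instance has polynomial size, so an algorithm for \shss running in time $f(k')\,(n')^{o(k'/\log k')}$ would refute \Cref{thm:marx10}.

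Before reducing we tidy the \msi instance. Color classes incident to no pattern edge are irrelevant, so the pattern has $\Theta(k)$ vertices; subdividing every pattern edge once (and subdividing the host accordingly) yields an equivalent instance whose pattern $H'$ is bipartite with $\Theta(k)$ edges, one side $X$ being the original vertex-classes and the other side $Z$ consisting of ``edge'' vertex-classes, each of degree exactly $2$. Orienting every edge of $H'$ from $X$ to $Z$ makes each vertex-class a pure source or a pure sink. Now create one \shss color class $C_j$ per edge of $H'$, so $k' = |E(H')| = \Theta(k)$; the elements of $C_j$ are put in bijection with the host edges consistent with the colors of $e_j$, padded with dummies to a common size $t \leq \mathrm{poly}(n)$, so that the chosen $2$-element of $C_j$ encodes the chosen host edge. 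Add to $\mathcal S_A$ the $A$-interval spanning all of $C_j$, for every $j$; together with $|S| = k'$ this forces exactly one pick per class. Finally choose the order of the elements inside $C_j$ and the permutation $\sigma_j$ so that the position of an element on track $A$ is determined, lexicographically, first by the index of its endpoint in the source-class of $e_j$, while its position on track $B$ is determined first by the index of its endpoint in the sink-class; thus the ``$A$-readable'' coordinate of a class is its source-side choice and the ``$B$-readable'' coordinate is its sink-side choice.

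It remains to enforce global consistency: all chosen edges meeting a common vertex-class must agree at that vertex. Because each vertex-class is a pure source or pure sink, every such constraint compares either two $A$-coordinates or two $B$-coordinates, never a mixture. Between two classes consecutive on track $A$, a suitable family of $A$-intervals spanning the two classes forbids exactly the ``staircase'' combinations in which the two source-coordinates differ in one prescribed direction, so that satisfying all these intervals amounts to an inequality between the two source-coordinates; the sink side is handled symmetrically on track $B$. To promote inequalities to equalities we close inequality chains into cycles: the classes incident to a fixed source-class are laid out consecutively on track $A$ as a path of such inequalities, and the edge closing that cycle is realized, in the opposite direction, between the two endpoint classes made consecutive on track $B$; the degree-$2$ sink-classes induce a perfect matching on the classes that is realized by adjacencies on track $B$ (with their staircases again closed via track $A$). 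The freedom to pick the partition order (track $A$) and the permutation $\sigma$ (track $B$) is exactly what lets us route all of these constraints onto the two available tracks.

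The crux, and the part that must be carried out carefully, is this routing together with the verification of the gadgets: (i) showing that all the equality-cycles (one per source-class), all the sink matchings, and the auxiliary inequalities can be \emph{simultaneously} laid out as adjacencies of the two tracks, a short scheduling argument once one observes that the relevant ``demand'' graphs on the $k'$ classes have bounded degree and the required structure (at worst after adding $O(k)$ auxiliary classes); (ii) checking that the staircase interval families genuinely force equality, and, with the class-spanning intervals and $|S| = k'$, exactly one pick per class, so that a \shss solution decodes to a valid \msi solution and conversely; and (iii) bounding all sizes, which is immediate. The forward direction, in which an \msi solution yields the obvious \shss solution, is routine.
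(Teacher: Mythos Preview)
The paper does not contain a proof of this theorem: it is quoted verbatim from \cite{BonnetM16} and only accompanied by the one-line remark that Bonnet and Miltzow obtained it ``by a reduction from \textsc{Multicolored Subgraph Isomorphism} linearly preserving the parameter.'' There is therefore nothing in the present paper to compare your argument against beyond that single sentence, and at that level your plan matches exactly: you reduce from \msi and aim for $k' = \Theta(k)$.

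As for the content of your sketch itself, the overall architecture is sensible, and you correctly identify the hard step. The part that is not yet a proof is precisely the one you flag as ``the crux'': the simultaneous routing of all equality cycles and sink matchings onto the two tracks. On each track a color class is adjacent to at most two others, so the constraint graph you can realize per track is a disjoint union of paths and cycles; you need the union over both tracks to contain (i) a cycle through each source-class's incident edge-classes and (ii) the degree-$2$ sink matching, with the additional twist that a ``closing'' inequality for a track-$A$ cycle must be hosted on track $B$ and vice versa. Saying that the demand graph has bounded degree is not enough by itself to guarantee such a decomposition into two Hamiltonian-path-like layers, and ``at worst after adding $O(k)$ auxiliary classes'' hides real work: each auxiliary class must itself carry a forced value and be woven into both orders without creating new unsatisfiable adjacencies. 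Until that scheduling argument is actually written out (or replaced by a cleaner gadget), what you have is a credible proof plan rather than a proof.
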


The same lower bound has been shown for \textsc{2-Track Hitting Set} by Marx and Pilipczuk \cite{MarxP15}.
They use this intermediate result to show that covering a given set of points in the plane with $k$ axis-parallel rectangles taken from a prescribed set cannot be solved in time $f(k)n^{o(k/ \log k)}$, even if the rectangles are almost squares.
Bonnet and Miltzow used Theorem~\ref{thm:shs-lowerbound} to show the same lower bound for \textsc{Point Guard Art Gallery} and \textsc{Vertex Guard Art Gallery}, where one wants to guard a simple polygon with $k$ points, and $k$ vertices, respectively.
In this paper, we again utilize \shss for a reduction to \rbs.
Thus, it seems as though \textsc{(Structured) 2-Track Hitting Set} can be a good starting point for a wide variety of geometric problems and yield almost tight lower bounds, like \textsc{Grid Tiling} \cite{Marx06} has been doing in the last decade for geometric problems optimally solvable in $n^{\Theta(\sqrt k)}$.

}

}
\section{Parameterized hardness for arbitrary slopes} 
\label{sec:hardness}

In this section, we show that \rbs is unlikely to be FPT with respect to the number of lines $k$ and establish that, unless the ETH fails, the $n^{O(k)}$-time brute-force algorithm is almost optimal.
\sv{
  We reduce from \shs \cite{BonnetM16}, see below; more about this problem and its lower bound can be found in the appendix.

%  \medskip

  }

\lv{
Let us say a few words about the difficulty of showing such a result for \rbs, compared to its NP-hardness.
A set of $k$ lines creates at most $h(k) := {k+1 \choose 2}+1$ cells.
Therefore, any YES-instance can be covered by $h(k)$ pairwise-disjoint monochromatic convex sets.
This prevents us from encoding an adjacency matrix on $n$ vertices with bichromatic gadgets, while one does not seem to achieve much with a monochromatic encoding.

A perhaps more concrete issue with encoding an adjacency matrix is the following.
Suppose we try to reduce directly from \textsc{Multicolored Subgraph Isomorphism} (or its special case \textsc{Multicolored Clique}), and we want a horizontal line $L(u)$ to represent the choice of a vertex $u$ within one set, a vertical line $L(v)$ to represent the choice of a vertex $v$ in another set, and the lines are compatible iff $uv$ is an edge.
Here is the pitfall: if $uv$ and $uw$ are edges, then $L(u)$ should form a feasible solution with $L(v)$ and with $L(w)$; but then, it can be observed that every vertical line in between $L(v)$ and $L(w)$ also completes $L(u)$ into a feasible solution (which is undesirable as soon as there are vertices between $v$ and $w$ which are not adjacent to $u$).

We overcome those issues by reducing from \shs.

}
If one deconstructs \shss, one finds intervals, a permutation of the color classes $\sigma$, and $k$ permutations $\sigma_j$'s of the elements within the classes.
Intervals, thanks to their geometric nature, can be realized by two red points which have to be separated from a diagonal of blue points (see Figure~\ref{fig:interval}), while permutation $\sigma$, being on $k$ elements, can be designed straightforwardly without blowing-up the size of the solution (see Figure~\ref{fig:permutation-classes}).
For these gadgets, we would like to force the chosen lines to be axis-parallel.
We obtain this by surrounding them with \emph{long alleys} made off long red paths parallel and next to long blue paths (see Figure~\ref{fig:long-alley-gadget}).
The main challenge is to get the permutations $\sigma_j$'s on $t$ elements.
To attain this, we match a selected line $L_i$ (corresponding to an element of index $i \in [t]$) to a specific angle $\alpha_i$, which \emph{leads} to the intended position of the element of index $\sigma_j(l) = i$, for some $l \in [t]$ (see Figure~\ref{fig:permutation-within-a-class}).
Note that the depicted gadget actually links the element of index $i$ to elements equal to or smaller than the element indexed at $\sigma_j(l)$. By combining two of these gadgets we can easily obtain only the intended position (see Figure~\ref{fig:overall}).

%The figures and their rather verbose captions are close to self-explanatory.
%We encourage the reader to skim through them before reading the technical details that the text provides.

\begin{theorem}
  \rbs is \wone-hard w.r.t. the number of lines $k$, and unless ETH fails, cannot be solved in time $f(k)n^{o(k/ \log k)}$ for any computable function $f$. %where $n$ is the total size of the instance.
\end{theorem}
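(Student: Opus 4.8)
The plan is to give a parameterized reduction from \shss to \rbs that preserves the parameter up to a constant factor: from an instance $\mathcal I = (k, t, \sigma, \sigma_1, \dots, \sigma_k, \mathcal S_A, \mathcal S_B)$ of \shss we build an instance of \rbs on $n = \mathrm{poly}(kt)$ points whose optimal separating set has size $O(k)$ if and only if $\mathcal I$ is a YES-instance. Combined with \cref{thm:shs-lowerbound}, which asserts \wone-hardness and the absence of an $f(k)n^{o(k/\log k)}$ algorithm for \shss under ETH, a linear-parameter reduction immediately transfers both conclusions to \rbs. The key design decision, as flagged in the introduction, is that the separating lines will not be axis-parallel: a selected line $L_i$ encoding the choice of an element of index $i \in [t]$ will be placed at a carefully chosen angle $\alpha_i$, and this angle is what carries the permutation information.

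First I would build the \emph{skeleton} that fixes the global structure. For each color class $C_j$ ($j \in [k]$) we reserve a horizontal ``A-strip'' and, according to the class-permutation $\sigma$, a horizontal ``B-strip'' in the vertical order $C'_{\sigma(1)}, \dots, C'_{\sigma(k)}$; the permutation $\sigma$ on $k$ elements is realized directly by routing blue/red paths between the two collections of strips, as in the permutation-of-classes gadget (Figure~\ref{fig:permutation-classes}), and since this only costs a bounded number of lines per class it does not blow up the parameter. Within each strip, an $A$-interval or $B$-interval constraint of $\mathcal S_A$ or $\mathcal S_B$ is encoded by a pair of red points that must be separated from a short diagonal run of blue points, so that \emph{some} cut must fall strictly between the two endpoints of the interval (Figure~\ref{fig:interval}); this is exactly the classical ``stabbing an interval'' trick and it forces the hitting-set semantics. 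To make sure the solution is forced to behave — i.e.\ to contribute exactly one line per class and to keep that line essentially axis-parallel where we need it — I would enclose the fragile parts in \emph{long alley} gadgets: long parallel red and blue polygonal paths running side by side, which any separating line crossing them must cross ``the short way,'' pinning down its orientation; this forces a budget of one dedicated line per class and per coordinate track.

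The heart of the construction, and the step I expect to be the main obstacle, is forcing the within-class permutations $\sigma_j$. The idea is that the single line $L_i$ charged to class $C_j$ enters the $A$-side strip at the position of the $i$-th element of $C_j$ (horizontal position $i$), but must exit on the $B$-side at the position $\sigma_j^{-1}(i)$ dictated by the reordering of $C'_j$; we realize this by letting $L_i$ be a \emph{slanted} line whose slope $\alpha_i$ is exactly the one that ``leads'' from source position $i$ to target position $\sigma_j(l)=i$, with red/blue point clusters placed so that only the correct angle is consistent with all the local separations (Figure~\ref{fig:permutation-within-a-class}). A single such angle-gadget only enforces a one-sided inequality — it constrains $L_i$ to reach a position at most $\sigma_j(l)$ — so I would use two mirrored copies of the gadget (one bounding from each side) to sandwich the line to the exact intended position (Figure~\ref{fig:overall}). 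The delicate points here are (i) choosing the coordinates of all point clusters so that distinct indices $i$ correspond to distinct, geometrically realizable slopes that do not accidentally satisfy a wrong constraint, (ii) ruling out ``cheating'' solutions that use fewer than one line per class, split a constraint across the wrong strips, or exploit the $\Theta(k^2)$ cells an honest solution creates, and (iii) verifying the converse direction — that any genuine solution $S$ of $\mathcal I$ yields exactly $O(k)$ lines of the right slopes that separate all red/blue pairs. Once both directions are checked and the point count is confirmed to be polynomial in $kt$ with solution size $\Theta(k)$, the ETH lower bound and \wone-hardness follow from \cref{thm:shs-lowerbound}.
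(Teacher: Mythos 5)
Your plan follows essentially the same route as the paper's proof: a parameter-linear reduction from \shss built from interval gadgets on a blue diagonal, long alleys that pin the relevant lines to (near-)axis-parallel orientations, a $k$-by-$k$ transfer gadget realizing $\sigma$, and two mirrored slanted-line gadgets per class that each enforce a one-sided inequality and together force the intra-class permutations $\sigma_j$, which is exactly the paper's construction. The verification burdens you flag as delicate are precisely what the paper discharges with its explicit budget of $6k+14$ lines and its two lemmas (no line separates more than $2\ell$ of the designated red/blue alley pairs, and an almost-horizontal/almost-vertical pair separating an interval gadget must meet on the blue diagonal), so your outline is consistent with the published argument.
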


\begin{proof}
  We reduce from \shss, which is \wone-hard and has the above lower bound under ETH \cite{BonnetM16}.
  Let $\mathcal I=(k \in \mathbb{N},t \in \mathbb{N}, \sigma \in \mathfrak S_k, \sigma_1 \in \mathfrak S_t, \ldots, \sigma_k \in \mathfrak S_t, \mathcal S_A, \mathcal S_B)$ be an instance of \shss.
  We will build an instance $\mathcal J=(\mathcal R,\mathcal B, 6k+14)$ of \rbs such that $\mathcal I$ is a YES-instance for \shss if and only if $\mathcal R$ and $\mathcal B$ can be separated with $6k+14$ lines.

  The points in $\mathcal R$ and $\mathcal B$ will have rational coordinates.
  More precisely, most points will be pinned to a $z$-by-$z$ grid where $z$ is polynomial in the size of $\mathcal I$.
  The rest will have rational coordinates with nominator and denominator polynomial in $z$. %; as points defined by the intersection of two rational lines\footnote{i.e., lines with equation $ax+by+c=0$ where $a$, $b$, and $c$ are rational numbers.} with nominator and denominator of their coefficients bounded by $z$. 
  %We observe that $z$ could even be single-exponential in $|\mathcal I|$; and the reduction would still be polynomial as soon as $|\mathcal R|+|\mathcal B|$ is polynomial, since a representation of the points only requires $O((|\mathcal R|+|\mathcal B|)\log z)$ bits.  
  %Though, we do not require such space for the grid.
 Let $\Gamma$ be the $z$-by-$z$ grid corresponding to the set of points with coordinates in $[z] \times [z]$.
  We call \emph{horizontally} (resp.~vertically) \emph{consecutive points} a set of points of $\Gamma$ with coordinates $(a,y), (a+1,y), \ldots (b-1,y), (b,y)$ for $a,b,y \in [z]$ and $a<b$ (resp. $(x,a), (x,a+1), \ldots (x,b-1), (x,b)$ for $a,b,x \in [z]$ and $a<b$).
  We denote those points by $\mathcal C(a \rightarrow b,y)$ (resp. $\mathcal C(x,a \rightarrow b)$).
 
 \medskip
\noindent  
  \textbf{Long alley gadgets.}
  In the gadgets encoding the intervals (see next paragraph), we will need to restrict the selected separating lines to be almost horizontal or almost vertical.
To enforce that, we use the \emph{long alley} gadgets.
A \emph{horizontal long alley} gadget is made of $\ell$ horizontally consecutive red points $\mathcal C(a \rightarrow a+\ell-1,y)$ and $\ell$ horizontally consecutive blue points $\mathcal C(a \rightarrow a+\ell-1,y')$ with $a,a+\ell-1,y \neq y' \in [z]$ (see Figure~\ref{fig:long-alley}).
A \emph{vertical long alley} is defined analogously.
Long alleys are called so because %we will set $\ell$ to some value much larger than $|y-y'| \geqslant 1$.
$\ell \gg |y-y'|$ thus, separating the red points from the blue points of a horizontal (resp. vertical) long alley with a budget of only one line, requires the line to be almost horizontal (resp. vertical).
The use of the long alleys will be the following.
Let $\mathcal G$ be a gadget for which we wish the separating lines to be almost horizontal or vertical.
Say, $\mathcal G$ occupies a $g$-by-$g$ subgrid of $\Gamma$ (with $g \ll z$).
We place four long alley gadgets to the left, top, right, and bottom of $\mathcal G$: horizontal ones to the left and right, vertical ones to the top and bottom (as depicted in Figure~\ref{fig:long-alley-usage}).
The left horizontal (resp. bottom vertical) long alley starts at the $x$-coordinate (resp. $y$ coordinate) of $1$, whereas the right horizontal (resp. top vertical) long alley ends at the $x$-coordinate (resp. $y$ coordinate) of $z$; see Figure~\ref{fig:overall}, where the long alleys are depicted by thin rectangles.

Note that we will not surround each and every gadget of the construction by four long alleys.
At some places, it will indeed be crucial that the lines can have arbitrary slopes.

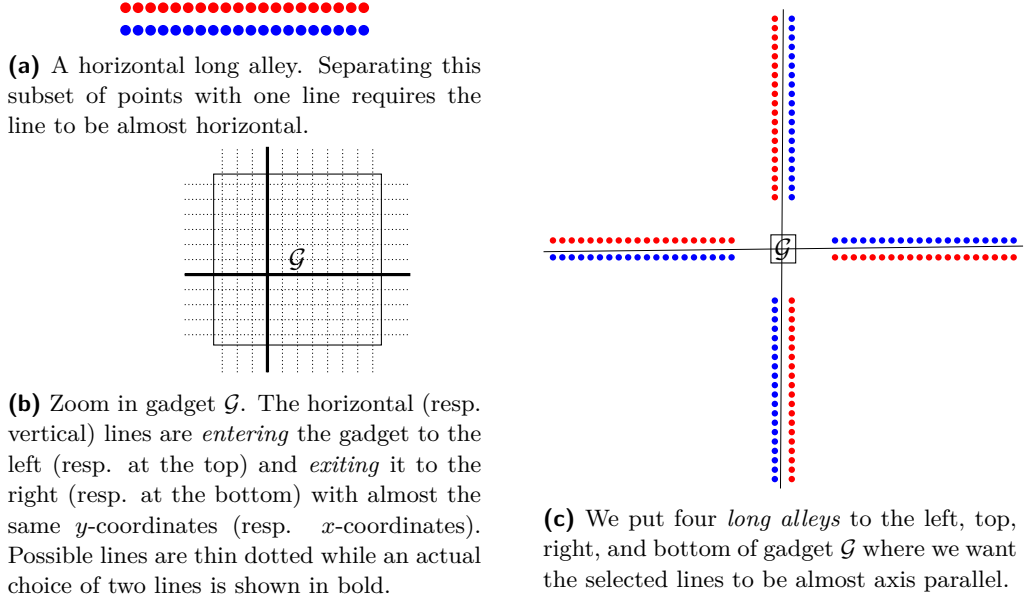
\begin{figure}
  \centering
   \begin{minipage}{0.45\textwidth}
     \centering
     \begin{tikzpicture}
       \def\s{6}
       \def\d{0.5}

       \foreach \i in {1,...,20}{
         \node[bp] at (\i / \s,0) {} ;
         \node[rp] at (\i / \s,0.3) {} ;
       }
     \end{tikzpicture}
     \subcaption{A horizontal long alley. Separating this subset of points with one line requires the line to be almost horizontal.}
     \label{fig:long-alley}

     \qquad
     \qquad
     \centering
     \begin{tikzpicture}
       \def\t{5}
       \def\s{0.2}

       \foreach \i in {-\t,...,\t}{
         \draw[densely dotted, thin] (\i * \s,-1.5) -- (\i * \s,1.5);
         \draw[densely dotted, thin] (-1.5,\i * \s) -- (1.5,\i * \s);
       }

       \draw[very thick] (-2 * \s,-1.5) -- (-2 * \s,1.5);
       \draw[very thick] (-1.5,-1 * \s) -- (1.5,-1 * \s);
       \node[draw,rectangle,inner sep=1cm] at (0,0) {$\mathcal G$} ;
     \end{tikzpicture}
     \subcaption{Zoom in gadget $\mathcal G$.
       The horizontal (resp. vertical) lines are \emph{entering} the gadget to the left (resp. at the top) and \emph{exiting} it to the right (resp. at the bottom) with almost the same $y$-coordinates (resp. $x$-coordinates).
       Possible lines are thin dotted while an actual choice of two lines is shown in bold.}
     \label{fig:almost-axis-parallel}
    \end{minipage}
     \qquad
    \begin{minipage}{0.45\textwidth}
      \centering
      \begin{tikzpicture}[scale=0.75]
       \def\s{6}
       \def\d{0.5}
       \def\t{20}
       \foreach \i in {1,...,\t}{
         \node[bp2] at (\i / \s,0) {} ;
         \node[rp2] at (\i / \s,0.3) {} ;
       }
       \foreach \i in {1,...,\t}{
         \node[rp2] at (\i / \s + 5,0) {} ;
         \node[bp2] at (\i / \s + 5,0.3) {} ;
       }
       \foreach \i in {1,...,\t}{
         \node[rp2] at (4.1,\i / \s + 0.9) {} ;
         \node[bp2] at (4.4,\i / \s + 0.9) {} ;
       }
       \foreach \i in {1,...,\t}{
         \node[bp2] at (4.1,\i / \s - 4.1) {} ;
         \node[rp2] at (4.4,\i / \s - 4.1) {} ;
       }

       \draw[very thin] (0, 0.1) -- (51 / \s, 0.2) ;
       \draw[very thin] (4.2, -4.1) -- (4.25, \t / \s + 0.9 + 1 / \s) ;
       \node[draw,rectangle,inner sep=0.05cm] at (4.25,0.15) {$\mathcal G$} ;
     \end{tikzpicture}
      \subcaption{We put four \emph{long alleys} to the left, top, right, and bottom of gadget $\mathcal G$ where we want the selected lines to be almost axis parallel.}
      \label{fig:long-alley-usage}
    \end{minipage} 
\caption{The long alley gadget and its use in combination with another gadget.}
\label{fig:long-alley-gadget}
\end{figure}

\medskip
\noindent
\textbf{Interval gadgets and encoding track A.}
The elements of $A$ are represented by a diagonal of $kt-1$ blue points.
More precisely, we add the points $(x_0^{A},y_0^{A}), (x_0^{A}+4,y_0^{A}+4), (x_0^{A}+8,y_0^{A}+8), \ldots, (x_0^{A}+4kt-8,y_0^{A}+4kt-8)$ to $\mathcal B$ for some offset $x_0^{A},y_0^{A} \in [z]$ that we will specify later.
We think those points as going from the first $(x_0^{A},y_0^{A})$ to the last $(x_0^{A}+4kt-8,y_0^{A}+4kt-8)$.
An almost horizontal (resp.~vertical) line just below (resp.~just to the left of) the $s$-th blue point of this diagonal translates as selecting the $s$-th element of $A$ in the order fixed by $\leq_A$.
The almost horizontal (resp.~vertical) line just above (resp.~just to the right of) the last blue point corresponds to selecting the $kt$-th, i.e., last, element of $A$. 

For each interval $[a_i^j,a_{i'}^{j'}]$ in $\mathcal S_A$ (for some $i,i' \in [k]$, $j,j' \in [t]$), that is, the interval between the $s := ((j-1)t+i)$-th and the $s' := ((j'-1)t+i')$-th elements of $A$, we add two red points: one at $(x_0^{A}+4s-7,y_0^{A}+4s'-5)$ and one at $(x_0^{A}+4s'-5,y_0^{A}+4s-7)$ (see Figure~\ref{fig:interval-gadget} for one interval gadget and Figure~\ref{fig:interval-gadgets} for track $A$).
Let $R([a_i^j,a_{i'}^{j'}])$ be this pair of red points.
Informally, one red point has its projection along the $x$-axis just to the left of the $s$-th blue point and its projection along the $y$-axis just above the $s'$-th blue point; the other one has its projection along the $x$-axis just to the right of the $s'$-th blue point and its projection along the $y$-axis just below the $s$-th blue point.
For technical reasons, we add, for every $j \in [k]$, the pair $R([a_1^j,a_t^j])$ encoding the interval formed by all the elements of the $j$-th color class of $A$.
Adding these intervals to $\mathcal S_A$ does not constrain the problem more.

We surround this encoding of track $A$, which we denote by $\mathcal G(A)$, with $4k$ long alleys, 
whose width is $4t-4$, from $x$-coordinates $x_0^{A}+4(j-1)t-2$ to $x_0^{A}+4jt-6$ for vertical alleys (from $y$-coordinates $y_0^{A}+4(j-1)t-2$ to $y_0^{A}+4jt-6$ for horizontal alleys).
We alternate red-blue\footnote{i.e., for horizontal (resp.~vertical) alleys, the red points are above (resp.~to the left of) the blue points.} alleys and blue-red alleys for two contiguous alleys so that there is no need to separate one from the other.
We start with a red-blue alley for the left horizontal and top vertical groups of alleys, and with a blue-red alley for the right horizontal and bottom vertical.
This last detail is not in any way crucial but permits the construction to be defined uniquely and consistently with the choices of Figure~\ref{fig:long-alley-usage}.
This, together with the description of long alleys in the previous paragraph, fully defines the $4k$ long alleys (see Figure~\ref{fig:overall}).

The general intention is that in order to separate those two red points from the blue diagonal with a budget of two almost axis-parallel lines, one should take two lines (one almost horizontal and one almost vertical) corresponding to the selection of the same element of $A$ which hits the corresponding interval.
In particular, taking two almost horizontal lines (resp.~two almost vertical lines) is made impossible due to those vertical (resp.~horizontal) long alleys.
More precisely, the intended pairs of lines separating the red points $R([a_i^j,a_{i'}^{j'}])$ from the blue diagonal are of the form $x=x_0^{A}+4\hat s-6, y=y_0^{A}+4\hat s-6$ for $\hat s \in [s,s']$.
Furthermore, the $4k$ long alleys force a pair of (almost) horizontal and vertical lines corresponding to one element per color class to be taken.

For any $s \in [tk]$, $i \in [t]$, and $j \in [k]$, such that $s=(j-1)t+i$, let $\text{HL}(s)$ be the horizontal line of equation $y=y_0^{A}+4s-6$ and $\text{VL}(s)$ the vertical line of equation $x=x_0^{A}+4s-6$.
They \emph{correspond} to selecting $a_i^j$, the $i$-th element in the $j$-th color class of $A$.
The goal of the remaining gadgets is to ensure that when the lines $\text{HL}(s)$ and $\text{VL}(s)$ (with $s=(j-1)t+i$) are chosen, additional lines corresponding to selecting element $b_i^j$ of $B$ have to be expressly selected.
We define $\text{HL} := \{\text{HL}(s) $ $|$ $ s \in [tk]\}$ and $\text{VL} := \{\text{VL}(s) $ $|$ $ s \in [tk]\}$.

\begin{figure}
  \centering
  \begin{minipage}{0.45\textwidth}
    \centering
    \begin{tikzpicture}
    \def\s{3}

\foreach \i in {1,...,8}{
  \node[bp] at (\i / \s,\i / \s) {} ;
}
\foreach \i in {1,...,9}{
  \node at (-1,\i / \s - 0.5 / \s) {$a_\i$} ;
  \draw[densely dotted, thin] (-0.5,\i / \s - 0.5 / \s) -- (10 / \s,\i / \s - 0.5 / \s) ;
}

\node[rp] at (0.25 / \s,8 / \s + 0.75 / \s) {} ;
\node[rp] at (8 / \s + 0.75 / \s,0.25 / \s) {} ;

\draw[very thick] (-0.5,4 / \s - 0.5 / \s) -- (10 / \s,4 / \s - 0.5 / \s) ;
\draw[very thick] (4 / \s - 0.5 / \s,- 0.5 / \s) -- (4 / \s - 0.5 / \s,9.5 / \s) ;

  \end{tikzpicture}
    \subcaption{The interval gadget corresponding to $[a_1,a_9]=\{a_1,\ldots,a_9\}$. In thin dotted, the mapping between elements and potential lines. In bold, the choice of the lines corresponding to picking $a_4$. If one wants to separate these points with two lines, one almost horizontal and one almost vertical, the choice of the former imposes the latter.}
    \label{fig:interval-gadget}
  \end{minipage}
  \qquad
\begin{minipage}{0.45\textwidth}
  \centering
  \sv{
  \resizebox{135pt}{!}}{
  \begin{tikzpicture}
    \def\s{3}
    \def\d{0.5}
    \def\x{15}

\foreach \i in {1,...,\x}{
  \node[bp] at (\i / \s,\i / \s) {} ;
}

\foreach \i/\j in {0/4,1/6,2/7,3/9,5/10,6/12,7/13,9/14,10/15}{
  \node[rp] at (\i / \s + \d / \s - 0.25 / \s,\j / \s + \d / \s + 0.25 / \s) {} ;
  \node[rp] at (\j / \s + \d / \s + 0.25 / \s,\i / \s + \d / \s - 0.25 / \s) {} ;
}

\foreach \i in {4,7,14}{
  \draw[very thick] (\i / \s - 0.5 / \s,- 0.5 / \s) -- (\i / \s - 0.5 / \s,\x / \s + 1.5 / \s) ;
  \draw[very thick] (- 0.5 / \s,\i / \s - 0.5 / \s) -- (\x / \s + 1.5 / \s,\i / \s - 0.5 / \s) ;
}

  \end{tikzpicture}
  }
  \subcaption{The interval gadgets put together. A representation of one track. Separating these points with the fewest axis-parallel lines requires taking the horizontal and vertical lines associated to a minimum hitting set.}
  \label{fig:interval-gadgets}
\end{minipage}
\caption{To the left, one interval. To the right, several put together to form one track.}
\label{fig:interval}
\end{figure}
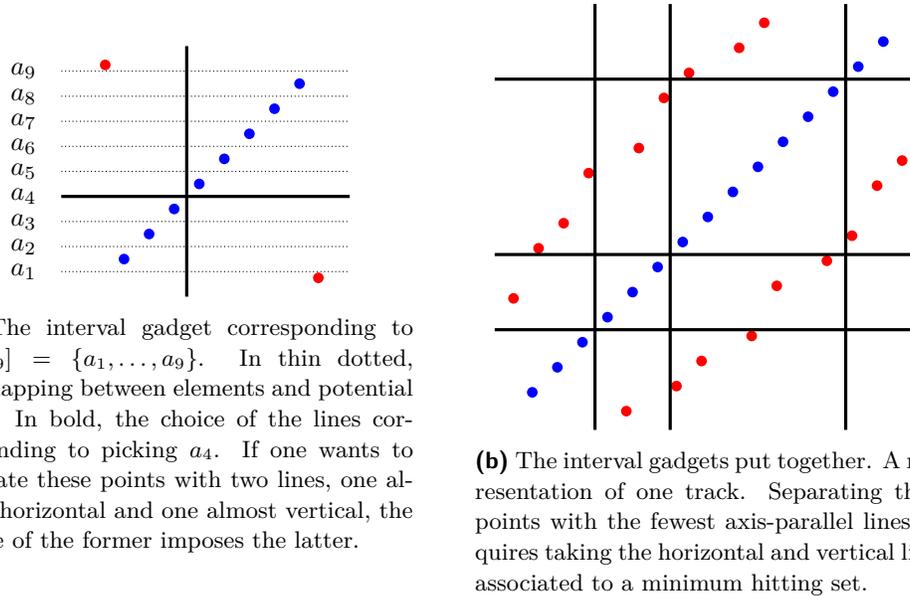

\medskip
\noindent
\textbf{Encoding inter-class permutation $\sigma$.}
To encode the permutation $\sigma$ of the $k$ color classes of $\mathcal I$, we allocate a square subgrid of the same dimension as the space used for the encoding of track $A$, roughly $4tk$-by-$4tk$, and we place it to the right of $A$ right as depicted in Figure~\ref{fig:overall}.
This square subgrid is naturally and regularly split into $k^2$ smaller square subgrids of equal dimension (roughly $4t$-by-$4t$).
This decomposition can be seen as the $k$ color classes of $\mathcal I$, or equivalently, the $k$-by-$k$ crossing\footnote{we use this term informally to avoid confusion with what we have been calling \emph{grids} so far.} obtained by drawing horizontal lines between two contiguous horizontal long alleys and vertical lines between two contiguous vertical long alleys.
We only put points in exactly one smaller square subgrid per column and per row.
Let $\sigma := \sigma(1)\sigma(2) \ldots \sigma(k)$ and $\text{Cell}(a,b)$ be the smaller square subgrid in the $a$-th row and $b$-th column of the $k$-by-$k$ crossing.
For each $j \in [k]$, we put in $\text{Cell}(j,\sigma(j))$ a diagonal of $t-1$ blue points and two red points corresponding to the full interval $[a^j_1,a^j_t]$ (see Figure~\ref{fig:permutation-classes}).
We denote by $\mathcal G(\sigma)$ those sets of red and blue points in the encoding of $\sigma$.
We surround $\mathcal G(\sigma)$ by $2k$ vertical long alleys similar to the $2k$ long alleys surrounding $\mathcal G(A)$.
Notice that $\mathcal G(\sigma)$ and $\mathcal G(A)$ share the same $2k$ surrounding horizontal long alleys.

The way the gadget $\mathcal G(\sigma)$ works is quite intuitive.
Given $k$ choices of horizontal lines originating from a separation in $\mathcal G(A)$ and a budget of $k$ extra lines for the separation within $\mathcal G(\sigma)$, the only option is to copy with the vertical line the choice of the horizontal line.
It results in a vertical propagation of the initial choices accompanied by the desired reordering of the \emph{color classes}.
The vertical line matching the choice of $\text{HL}(s)$ in the corresponding cell of $\mathcal G(\sigma)$ is denoted by $\text{VL}'(s)$. Let $\text{VL}' := \{\text{VL}'(s) $ $|$ $ s \in [tk]\}$.
Note that corresponding lines in $\text{VL}$ and in $\text{VL}'$ have a different order from left to right.

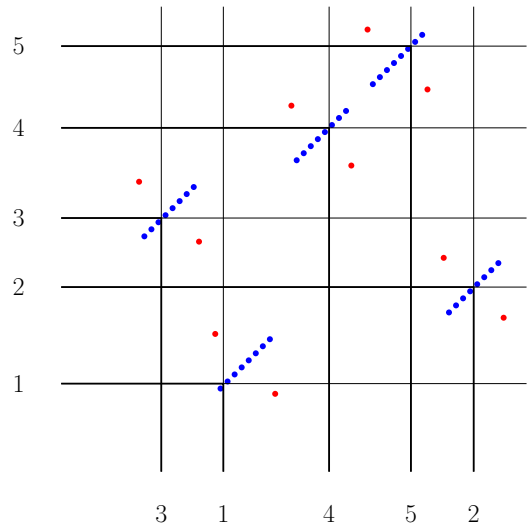
\begin{figure}[t!]
  \centering
  \resizebox{200pt}{!}{
  \begin{tikzpicture}
    \def\s{3}
    \def\t{8}
    \def\h{1.8}

    \foreach \j/\k in {1/3,2/1,3/4,4/5,5/2}{
      \begin{scope}[xshift=\h * \j cm,yshift=\h * \k cm,scale=0.5]
        \foreach \i in {1,...,\t}{
          \node[bp] at (\i / \s,\i / \s) {} ;
        }
        \node[rp] at (0.25 / \s,\t / \s + 0.75 / \s) {} ;
        \node[rp] at (\t / \s + 0.75 / \s,0.25 / \s) {} ;
      \end{scope}
    }

    \pgfmathtruncatemacro\xl{0}
    \pgfmathtruncatemacro\xr{35 / \s}

    \pgfmathtruncatemacro\yt{35 / \s}
    \pgfmathtruncatemacro\yb{0}

    \coordinate (a1) at (\xl,6.25 / \s) ;
    \coordinate (a2) at (\xl,13.1 / \s) ;
    \coordinate (a3) at (\xl,18 / \s) ;
    \coordinate (a4) at (\xl,24.4 / \s) ;
    \coordinate (a5) at (\xl,30.2 / \s) ;
    
    \draw (a1) -- (\xr,6.25 / \s) ;
    \draw (a2) -- (\xr,13.1 / \s) ;
    \draw (a3) -- (\xr,18 / \s) ;
    \draw (a4) -- (\xr,24.4 / \s) ;
    \draw (a5) -- (\xr,30.2 / \s) ;

    \coordinate (b1) at (7.1 / \s,\yb) ;
    \coordinate (b2) at (11.5 / \s,\yb) ;
    \coordinate (b3) at (19 / \s,\yb) ;
    \coordinate (b4) at (24.8 / \s,\yb) ;
    \coordinate (b5) at (29.25 / \s,\yb) ;

    \draw[very thin] (7.1 / \s,\yt) -- (b1) ;
    \draw[very thin] (11.5 / \s,\yt) -- (b2) ;
    \draw[very thin] (19 / \s,\yt) -- (b3) ;
    \draw[very thin] (24.8 / \s,\yt) -- (b4) ;
    \draw[very thin] (29.25 / \s,\yt) -- (b5) ;

    \draw[very thick]  (a1) -- (11.5 / \s,6.25 / \s) -- (b2) ;
    \draw[very thick]  (a2) -- (29.25 / \s,13.1 / \s) -- (b5) ;
    \draw[very thick]  (a3) -- (7.1 / \s,18 / \s) -- (b1) ;
    \draw[very thick]  (a4) -- (19 / \s,24.4 / \s) -- (b3) ;
    \draw[very thick]  (a5) -- (24.8 / \s,30.2 / \s) -- (b4) ;

    \foreach \i in {1,...,5}{
      \node[left of=a\i] {\LARGE{$\i$}} ;
    }
    \node[below of=b1] {\LARGE{$3$}} ;
    \node[below of=b2] {\LARGE{$1$}} ;
    \node[below of=b3] {\LARGE{$4$}} ;
    \node[below of=b4] {\LARGE{$5$}} ;
    \node[below of=b5] {\LARGE{$2$}} ;
  \end{tikzpicture}
  }
\caption{Encoding permutation $\sigma=31452$.
  The choices within the five color classes are transferred from almost horizontal lines to almost vertical ones.
  This way, we obtain the desired reordering of the color classes.}
\label{fig:permutation-classes}
\end{figure}

\medskip
\noindent
\textbf{Encoding of the intra-class permutations $\sigma_j$'s and track B.}
If the encoding of permutation $\sigma$ is conceptually simple, the number of intended lines separating red and blue points in $\mathcal G(\sigma)$ has to be linear in the number of permuted elements.
Since we wish to encode a permutation $\sigma_j$ (for every $j \in [k]$) on $t$ elements, we cannot use the same mechanism as it would blow-up our parameter dramatically and would not result in an FPT reduction.

For the gadget $\mathcal G_{\approx v}(\sigma_j)$ partially encoding the permutation $\sigma_j$, we will crucially use the fact that separating lines can have arbitrary slopes.
Slightly to the right (at distance at least $\ell$) of the vertical line bounding the right end of $\mathcal G(\sigma)$ and far in the south direction, we place a gadget $\mathcal G(B)$ encoding track B similarly to the encoding of track A up to some symmetry that we will make precise later.
We also incline the whole encoding of track B with a small, say 5, degree angle, in a way that its top-left corner is to the right of its bottom-left corner. We round up the real coordinates that this rotation incurs to rationals at distance less than, say, $(kt)^{-10}$.  
We denote by $\hat v$ the distance along the $y$-axis between $\mathcal G(\sigma)$ and $\mathcal G(B)$.
Eventually $\hat v$ will be chosen much larger than $\Theta(kt)$, which is the size of $\mathcal G(A)$, $\mathcal G(B)$, $\mathcal G(\sigma)$.
Below $\mathcal G(\sigma)$ at a distance $2 \hat v$ along the $y$-axis, we place gadgets $\mathcal G_{\approx v}(\sigma_j)$'s; from left to right, we place $\mathcal G_{\approx v}(\sigma_{\sigma(1)})$, $\mathcal G_{\approx v}(\sigma_{\sigma(2)})$, \dots, $\mathcal G_{\approx v}(\sigma_{\sigma(k)})$ such that for every $i \in [k]$, $G_{\approx v}(\sigma_{\sigma(i)})$ falls below the $i$-th column of the $k$-by-$k$ crossing of $\mathcal G(\sigma)$.
Gadgets $\mathcal G_{\approx v}(\sigma_j)$'s are represented by small round shapes in Figure~\ref{fig:overall}.
Notwithstanding what is drawn on the overall picture, the $\mathcal G_{\approx v}(\sigma_j)$'s can be all placed at the same $y$-coordinates.
Let $y_1 := y_0 - 2 \hat v$ (the exact value of $y_1$ is not crucial).
Also, we represent track B slanted by a 45 degree angle, instead of the actual 5 degree angle, to be able to fit everything on one page and convey the main ideas of the construction.
In general, for the figure to be readable, the true proportions are not respected.
The size of every gadget is much smaller than the distance between two different groups of gadgets, so that every line \emph{entering} a gadget traverses it in an axis-parallel fashion.   

Gadget $\mathcal G_{\approx v}(\sigma_j)$ is built in the following way.
For each $i \in [t]$ and $j \in [k]$, we draw a fictitious points $p_i^j$ corresponding to the intersection of a close to vertical line corresponding to picking element $b_i^j$ in gadget $\mathcal G(B)$ with the bottom end of $\mathcal G(B)$.
Read from left to right, the $p_i^j$'s have the same order as the $b_i^j$'s in $(B,\leq_B)$.
For every $s=(j-1)t+i$ (with $j \in [k]$ and $i \in [t]$), let $q_i^j$ be the point of $y$-coordinate $y_1$ on the line $\text{VL}'(s)$.
We define the line $\text{SL}(s)$ as going through $p_i^j$ and $q_i^j$, and set $\text{SL} := \{\text{SL}(s) $ $|$ $ s \in [tk]\}$.
We add two blue points just to the left and just to the right of $q_i^j$ at distance $\epsilon := z^{-10}$.
We also add two blue points on line $\text{SL}(s)$, one to the left of $q_i^j$ and one to the right of $q_i^j$.
Finally, we place two red points for each $\mathcal G_{\approx v}(\sigma_j)$ at the bottom-left and top-right of the gadget (see Figure~\ref{fig:permutation-within-a-class}).
Note that in the figure, the lines in $\text{SL}$ form a large angle with the $y$-axis, while in fact they are quite close to a 5 degree angle and behave like \emph{relatively vertical\footnote{By that, we mean that the lines are close to vertical for axes aligned with the encoding of track B.}} lines within $\mathcal G(B)$ (since $\mathcal G(B)$ is also inclined by 5 degrees).

Assuming that line $\text{VL}'(s=(j-1)t+i)$ has been selected, it might be observed from Figure~\ref{fig:permutation-within-a-class} that separating the red points from the blue points in $\mathcal G_{\approx v}(\sigma_j)$ with a budget of one additional line requires to take a line crossing $\text{VL}'(s)$ at (or very close to) $q_i^j$ and with a higher or equal slope to $\text{SL}(s)$.
It is not quite what we wanted.
What we achieved so far is only to link the \emph{choice of $a_i^j$} with the \emph{choice of an element smaller or equal to $b_i^j$}.
We will use a symmetry $\mathcal G_{\approx h}(\sigma_j)$ of gadget $\mathcal G_{\approx v}(\sigma_j)$ to get the other inequality so that choosing some lines corresponding to $a_i^j$ actually forces to take some lines corresponding to $b_i^j$.

\begin{figure}[h!]
  \centering
  \resizebox{400pt}{!}{
  \begin{tikzpicture}[
      potential/.style={densely dotted, thin},
      extended line/.style={shorten <=-#1},
      extended line/.default=6.5cm]

    \node (offset) at (-5,0) {} ;
    
    \def\s{0.5}
    \def\t{8}
    \def\ytop{10}
    \def\ybot{-3}
    \def\ygad{3}
    \def\offset{15}
    \foreach \i in {1,...,\t}{
      \node at (\s * \i, \ytop+0.5) {\Large{$\i$}} ;
      \draw[potential] (\s * \i,\ytop) -- (\s * \i,\ybot) ;
      \node[bp] at (\s * \i-0.1,\ygad) {} ;
      \node[bp] at (\s * \i+0.1,\ygad) {} ;
    }
    
    \foreach \i/\j in {0/0.22,1/0.26,2/0.3,3/0.23,4/0.29,5/0.26,6/0.51,7/0.44}{
      \node[bp] at (\s * \i+0.1,\ygad-\j) {} ;
    }
    \foreach \i/\j in {2/0.22,3/0.24,4/0.29,5/0.23,6/0.29,7/0.28,8/0.47,9/0.37}{
      \node[bp] at (\s * \i-0.1,\ygad+\j) {} ;
    }
    
    \foreach \i/\j in {1/6,2/3,3/2,4/8,5/5,6/7,7/1,8/4}{
      \coordinate (a\i) at (\s * \i,\ygad) ;
      \coordinate (b\j) at (\s * \j+\s *\offset,\ytop) ;
      \node at (\s * \j+\s *\offset,\ytop+0.5) {\Large{$\i$}} ;
      \draw[potential,extended line]  (a\i) -- (b\j) ;
    }

    \node[rp] at (0.3,-2.25) {} ;
    \node[rp] at (4.2,6) {} ;

    \draw[very thick] (\s * 6,\ytop) -- (\s * 6,\ybot) ;
    \draw[very thick,extended line]  (a6) -- (b7) ;
    \fill[opacity=0.2] (a6) -- (b7) -- (b1) -- cycle;
    %\draw[extended line]  (a7)++(-0.5,0) -- (b1) ;

    \draw[thick,->] (-1.1,-0.8) -- (-0.7,-1.2) ;
    \draw[thick,->] (8.5,8) -- (8.1,8.4) ;
  \end{tikzpicture}
  }
  \caption{Half-encoding of permutation $\sigma_j=73285164$ of the $j$-th color class.
    Observe that the choice of the, say, sixth almost horizontal candidate line only forces to take the slanted line depicted in bold or a line having the same intersection with the almost horizontal line but a larger slope.
  For the sake of legibility, the angles between the vertical lines and the slanted lines are exaggerated.}
  \label{fig:permutation-within-a-class}
\end{figure}

We add a gadget $\mathcal G(\text{id})$ below the $\mathcal G_{\approx v}(\sigma_j)$'s.
$\mathcal G(\text{id})$ is obtained by mimicking $\mathcal G(\sigma)$ for the identity permutation.
We surround $\mathcal G(\text{id})$ by $2k$ new horizontal long alleys.
The horizontal line matching the choice of $\text{VL}'(s)$ in $\mathcal G(\text{id})$ is denoted by $\text{HL}'(s)$.
At a distance $\hat h \approx \hat v/(cos(5^\circ) \cdot sin(5^\circ))$ to the right of $\mathcal G(\text{id})$ we place gadgets $\mathcal G_{\approx h}(\sigma_j)$'s analogously to the $\mathcal G_{\approx v}(\sigma_j)$'s.
The fictitious points ${p'}_i^j$ (analogous of $p_i^j$) used for the construction of the lines $\text{SL}'(s)$ (analogous of $\text{SL}(s)$) are located at the right end of $\mathcal G(B)$ and ordered as $B$ when read from top to bottom.
The slight difference in the construction of $\mathcal G(B)$ from the $B$-intervals (compared to $\mathcal G(A)$ from the $A$-intervals) is that the diagonal of blue points go from the top-left corner to the bottom-right corner (instead of bottom-left to top-right).
Similarly to our previous definitions, we define $\text{HL}' := \{\text{HL}'(s) $ $|$ $ s \in [tk]\}$ and $\text{SL}' := \{\text{SL}'(s) $ $|$ $ s \in [tk]\}$.
Note that the choice of $\hat h$ makes the lines of $\text{SL}'$ form a close to $5$ degree angle with the $x$-axis and so \emph{arrive relatively horizontal} within $\mathcal G(B)$.

\medskip
\noindent
\textbf{Putting the pieces together.}
We already hinted at how the different gadgets are combined together.
We choose the different typical values so that: $kt \ll \hat v < \hat h \ll z$.
For instance, $\hat v := 100((kt)^2+1)$ and $z := 100(\hat h^5+1)$.
An important and somewhat hidden consequence of $z$ being much greater than $\hat v$ and $\hat h$ is that the bulk of the construction (say, all the gadgets which are not long alleys) occupies a tiny space in the top-left corner of $\Gamma$.
We set the length $\ell$ of the long alleys to $100(k^2+1)$.
Point $(x_0^A,y_0^A)$ corresponds to the bottom-left corner of the square in bold with a diagonal close to the overall top-left corner.

Slightly outside grid $\Gamma$ we place $14$ pairs of long alleys ($7$ horizontal and $7$ vertical) of width, say, $(kt)^{-10}$ to force the $14$ lines in bold in Figure~\ref{fig:overall}.
Note that, on the figure, we do not explicitly represent those long alleys but only the lines they force.
The purpose of those new long alleys is to separate groups of gadgets from each other.
Going clockwise all around the grid $\Gamma$, we alternate red-blue and blue-red alleys so that two consecutive long alleys do not need a further separation.
The even parity of those alleys make this alternation possible. 
Each one of the $64$ faces that those $14$ lines define is called a \emph{super-cell}.

\begin{figure}[h!]
  \centering
  \sv{
  \resizebox{270pt}{!}}{
  \begin{tikzpicture}[scale=0.79]
    \def\d{0.25}
    \def\k{5}
    \def\e{1.5}
    \def\a{1}
    \def\ab{0.8}
    \def\f{0.4}

\begin{scope}[yshift=8cm]
    
    \foreach \i/\j in {-21/1,-21/2,-21/3,-21/4,-21/5,39/1,39/2,39/3,39/4,39/5}{
      \begin{scope}[xshift=\d * \i cm,yshift=\d * \j cm]
        \draw (0,0) -- (0,\ab * \d) -- (\a,\ab * \d) -- (\a,0) -- (0,0) ;
      \end{scope}
    }
    \foreach \i/\j in {-12/9,-11/9,-10/9,-9/9,-8/9,-12/-66,-11/-66,-10/-66,-9/-66,-8/-66,1/9,2/9,3/9,4/9,5/9,1/-66,2/-66,3/-66,4/-66,5/-66}{
      \begin{scope}[xshift=\d * \i cm,yshift=\d * \j cm]
        \draw (0,0) -- (\ab * \d,0) -- (\ab * \d,\a) -- (0,\a) -- (0,0) ;
      \end{scope}
    }

    \foreach \i/\j in {1/3,2/1,3/4,4/5,5/2}{
      \begin{scope}[xshift=\d * \i cm,yshift=\d * \j cm]
        \draw (0,0) -- (0,\d) -- (\d,\d) -- (\d,0) -- (0,0) -- (\d,\d) ;
      \end{scope}
    }

        \begin{scope}[very thick, xshift=-3 cm,yshift=\d cm]
      \draw (0,0) -- (0,\k * \d) -- (\k * \d,\k * \d) -- (\k * \d,0) -- (0,0) -- (\k * \d,\k * \d) ;
    \end{scope}

\end{scope}

    \foreach \i/\j in {-21/-20,-21/-21,-21/-22,-21/-23,-21/-24,39/-20,39/-21,39/-22,39/-23,39/-24}{
      \begin{scope}[xshift=\d * \i cm,yshift=\d * \j cm]
        \draw (0,0) -- (0,\ab * \d) -- (\a,\ab * \d) -- (\a,0) -- (0,0) ;
      \end{scope}
    }

 \foreach \i/\j in {1/-20,2/-21,3/-22,4/-23,5/-24}{
      \begin{scope}[xshift=\d * \i cm,yshift=\d * \j cm]
        \draw (0,0) -- (0,\d) -- (\d,\d) -- (\d,0) -- (0,0) -- (\d,\d) ;
      \end{scope}
    }

    \foreach \i/\j in {1/-10,2/-11,3/-12,4/-13,5/-14}{
      \begin{scope}[xshift=\d * \i cm,yshift=\d * \j cm]
        \draw[rounded corners] (0,0) -- (0,\d) -- (\d,\d) -- (\d,0) -- cycle ;
        \draw (\d/2,0) -- (\d/2,\d) ;
        \draw (\d/2 - \f * \d,0) -- (\d/2+ \f * \d,\d) ;
      \end{scope}
    }

    \foreach \i/\j in {34/-20,33/-21,32/-22,31/-23,30/-24}{
      \begin{scope}[xshift=\d * \i cm,yshift=\d * \j cm]
        \draw[rounded corners] (0,0) -- (0,\d) -- (\d,\d) -- (\d,0) -- cycle ;
        \draw (0,\d/2) -- (\d,\d/2) ;
        \draw (0,\d/2 + \f * \d) -- (\d,\d/2 - \f * \d) ;
      \end{scope}
    }

     \begin{scope}[very thick,xshift=3.2 cm,yshift=-1.9 cm,rotate=45]
      \draw (0,0) -- (0,\e * \k * \d) -- (\e * \k * \d,\e * \k * \d) -- (\e * \k * \d,0) -- (0,0) -- (\e * \k * \d,\e * \k * \d) ;
     \end{scope}

     %lines
     \def\el{-5.7}
     \def\er{11.2}
     \def\top{11.6}
     \def\bot{-8.8}
 
     \foreach \i in {9.31,9.12,8.85,8.63,8.38,-4.9,-5.12,-5.38,-5.69,-5.87}{
       \draw[very thin] (\el,\i) -- (\er,\i) ;
     }
     \foreach \i in {-2.87,-2.62,-2.4,-2.13,-1.94,0.35,0.63,0.87,1.06,1.38}{
       \draw[very thin] (\i,\top) -- (\i,\bot) ;
     }

     \draw[very thin] (0,-2.7) -- +(45:6) ;
     \draw[very thin] (0,-3.25) -- +(45:6.4) ;
     \draw[very thin] (0,-3.75) -- +(45:6.8) ;
     \draw[very thin] (0,-4.2) -- +(45:7.2) ;
     \draw[very thin] (0,-4.73) -- +(45:7.6) ;

     \draw[very thin] (7.88,-6.2) -- +(135:17) ;
     \draw[very thin] (8.4,-6.2) -- +(135:17.4) ;
     \draw[very thin] (8.86,-6.2) -- +(135:17.8) ;
     \draw[very thin] (9.35,-6.2) -- +(135:18.15) ;
     \draw[very thin] (9.9,-6.2) -- +(135:18.4) ;

     \def\sxb{-6}
     \def\sxe{11.5}
     \def\syb{11.7}
     \def\sye{-9}

     \foreach \j in {-2,-4,-6.5,-8.6,0.8,9.8,11.4}{
       \draw[very thick] (\sxb,\j) -- (\sxe,\j) ;
     }
     \foreach \i in {-5.5,-4,-1,1.8,4.6,9,11}{
       \draw[very thick] (\i,\syb) -- (\i,\sye) ;
     }

     \def\opa{0.2}
     \def\opacol{0.4}

     \fill[opacity=\opa] (-1,0.8) -- (1.8,0.8) -- (1.8,-2) -- (-1,-2) -- cycle ;
     \fill[opacity=\opa] (4.6,0.8) -- (9,0.8) -- (9,-2) -- (4.6,-2) -- cycle ;
     \fill[opacity=\opa] (4.6,-2) -- (4.6,-4) -- (1.8,-4) -- (1.8,-2) -- cycle ;
     \fill[opacity=\opa] (4.6,0.8) -- (4.6,9.8) -- (1.8,9.8) -- (1.8,0.8) -- cycle ;

     \foreach \i/\j in {-5.5/-4,-1/1.8,4.6/9,11/\sxe}{
       \fill[blue,opacity=\opacol] (\j,\syb) -- (\i,\syb) -- (\i,11.4) -- (\j,11.4) -- cycle ;
       \fill[red,opacity=\opacol] (\j,\sye) -- (\i,\sye) -- (\i,-8.6) -- (\j,-8.6) -- cycle ;
     }
     \foreach \i/\j in {\sxb/-5.5,-4/-1,1.8/4.6,9/11}{
       \fill[red,opacity=\opacol] (\j,\syb) -- (\i,\syb) -- (\i,11.4) -- (\j,11.4) -- cycle ;
       \fill[blue,opacity=\opacol] (\j,\sye) -- (\i,\sye) -- (\i,-8.6) -- (\j,-8.6) -- cycle ;
     }
     \foreach \i/\j in {-8.6/-6.5,-4/-2,0.8/9.8}{
       \fill[red,opacity=\opacol] (\sxb,\i) -- (\sxb,\j) -- (-5.5,\j) -- (-5.5,\i) -- cycle ;
       \fill[blue,opacity=\opacol] (\sxe,\i) -- (\sxe,\j) -- (11,\j) -- (11,\i) -- cycle ;
     }
     \foreach \i/\j in {-6.5/-4,-2/0.8,9.8/11.4}{
       \fill[blue,opacity=\opacol] (\sxb,\i) -- (\sxb,\j) -- (-5.5,\j) -- (-5.5,\i) -- cycle ;
       \fill[red,opacity=\opacol] (\sxe,\i) -- (\sxe,\j) -- (11,\j) -- (11,\i) -- cycle ;
     }
  \end{tikzpicture}
  }
  \caption{The overall picture.
    The thin rectangles are long alleys, the bold large squares with a diagonal are the encoding of track A, in the top left corner, and track B, slanted by $45$ degrees (for the sake of fitting the whole construction on one page; in reality the encoding of B is only inclined by $5$ degrees). The smaller squares with a diagonal are simple interval gadgets and the small round gadgets are half-encodings of the permutations $\sigma_i$'s.
    The four super-cells filled with grey contain $4k$ long alleys slanted by $5$ degrees.
  The (super-)cells filled with red and blue match their color, and are monochromatic once the $14$ lines imposed by the outermost long alleys have been selected. }
  \label{fig:overall}
\end{figure}
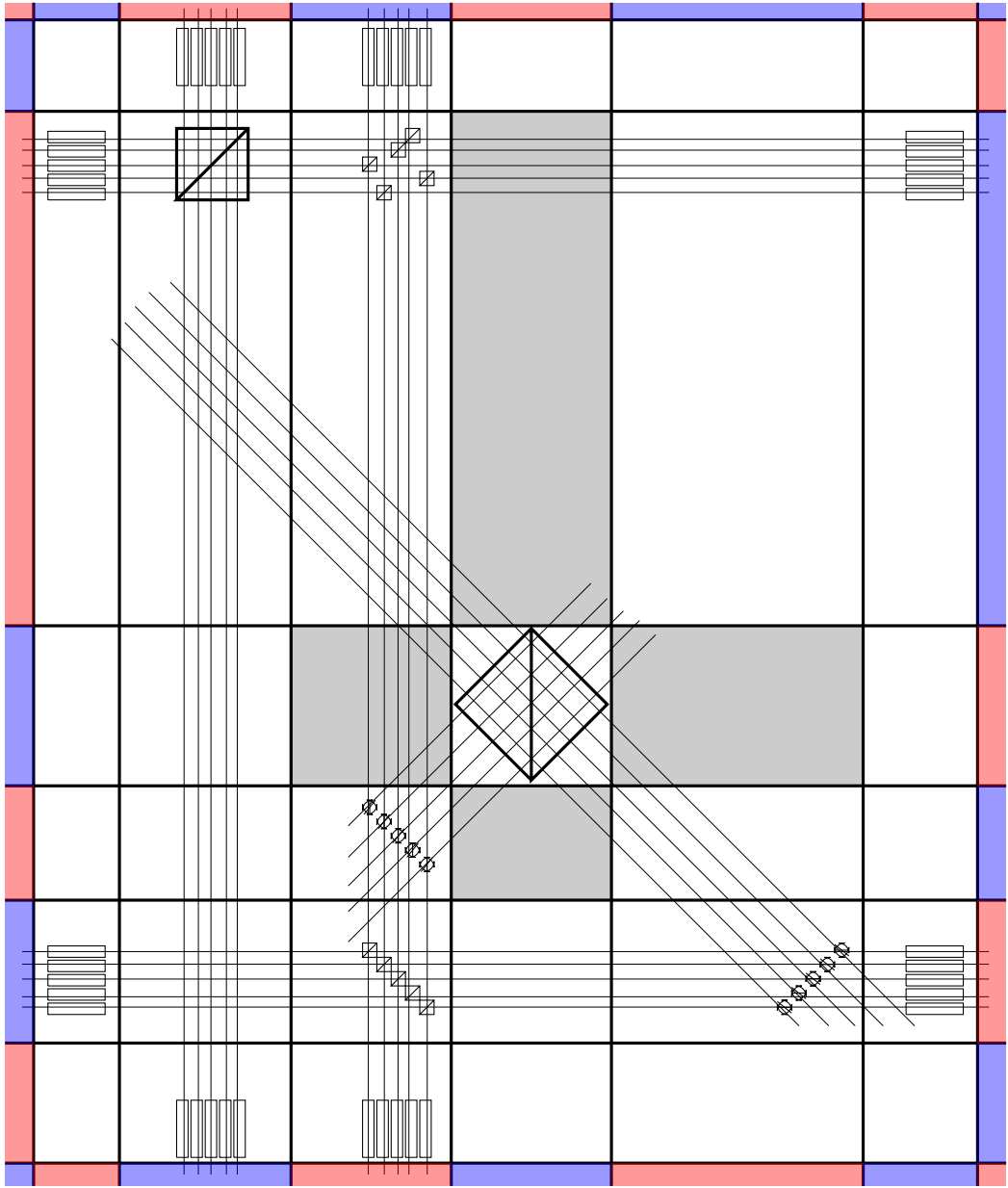

The four lines in bold surrounding $\mathcal G(B)$ are close (say, at distance $10t$) to the north, south, west, and east ends of that gadget.
On the four super-cells adjacent to the super-cell containing $\mathcal G(B)$, shown in gray, we place $4k$ long alleys each of width $4t-4$, analogously to what was done for $\mathcal G(A)$, but slanted by a $5$ degree angle (as the gadget $\mathcal G(B)$).
As for track A, these alleys force, relatively to the orientation of $\mathcal G(B)$, one \emph{close to horizontal} line and one \emph{close to vertical line} per color class.
The long alleys are placed just next to $\mathcal G(B)$ and are not crossed by any other candidate lines.
%It is very important that those long alleys are placed close to the gadget $\mathcal G(B)$ and not at the periphery of the construction.
%Indeed, although we do wish that the chosen separating lines behave like axis-parallel lines within $\mathcal G(B)$, we crucially rely on having an unbounded (in the parameter) number of potential slopes to encode the permutations $\sigma_j$'s, when those lines cross the gadgets $\mathcal G_{\approx v}(\sigma_j)$'s and $\mathcal G_{\approx h}(\sigma_j)$'s.
%Thus globally, we want those lines (of $\text{SL}$ and $\text{SL}'$) to be close to verticality or horizontality but \emph{not as close as} the almost vertical or horizontal of $\text{VL}$, $\text{VL}'$, $\text{HL}$, $\text{HL}'$.

This finishes the construction. We ask for a
separation of $\mathcal R$ and $\mathcal B$ with $6k+14$ lines.  \sv{The
correctness of the reduction is deferred to the appendix.} \lv{
We now show the correctness of the reduction.

\medskip

\textbf{If $\mathcal I$ is a YES-instance for \shss, then $6k+14$ lines are sufficient.}
Let $F$ be the set of $14$ lines forced by the outermost long alleys (lines in bold in Figure~\ref{fig:overall}).
Let $(a_{u_1}^1,b_{u_1}^1), (a_{u_2}^2,b_{u_2}^2), \ldots, (a_{u_k}^k,b_{u_k}^k)$ be a solution of \shss ($u_1, u_2, \ldots, u_k \in [t]$).
Let $s_j := (j-1)t+u_j$ for every $j \in [k]$.
$F \cup \bigcup_{j \in [k]}\{\text{HL}(s_j), \text{VL}(s_j), \text{VL}'(s_j), \text{HL}'(s_j), \text{SL}(s_j), \text{SL}'(s_j)\}$ is a set of $6k+14$ lines.
We claim that it is a solution.

Due to $F$, we only need to check that the red and blue points of the \emph{same} super-cell are separated.
The constant number of outermost long alleys are well separated: see the alternating coloring of Figure~\ref{fig:overall}.
As the other long alleys also alternates red-blue and blue-red, the super-cells containing $k$ long alleys are all well separated.

This leaves us $6$ super-cells to check: namely those of $\mathcal G(A)$, $\mathcal G(B)$, $\mathcal G(\sigma)$, $\mathcal G(\text{id})$, the $\mathcal G_{\approx v}(\sigma_j)$'s, and the $\mathcal G_{\approx h}(\sigma_j)$'s.
The points in $\mathcal G(\sigma)$ and $\mathcal G(\text{id})$ are separated as in Figure \ref{fig:interval-gadget}, since the choice of $\text{VL}'(s_j)$ matches the choices of $\text{HL}(s_j)$ and $\text{HL}'(s_j)$.
As it can be observed by looking at $\text{Cell}(4,3)$ and $\text{Cell}(5,4)$ of Figure~\ref{fig:permutation-classes}, there is not interaction between the red and blue points of diagonally adjacent faces of the $k$-by-$k$ crossing (in $\mathcal G(\sigma)$ and $\mathcal G(\text{id})$).

Since $a_{u_1}^1, a_{u_2}^2, \ldots, a_{u_k}^k$ (resp.~$b_{u_1}^1, b_{u_2}^2, \ldots, b_{u_k}^k$) is a hitting set of $\mathcal S_A$ (resp.~$\mathcal S_B$), the points in $\mathcal G(A)$ (resp.~$\mathcal G(B)$) are separated as in Figure \ref{fig:interval-gadgets}.
Indeed for each interval $I \in \mathcal S_A$ (resp~$I \in \mathcal S_B$), there is an $j \in [k]$ such that $a_{u_j}^j$ hits $I$ (resp.~$b_{u_j}^j$ hits $I$), and the two red points encoding $I$ are in the two quadrants defined by $\text{HL}(s_j)$ and $\text{VL}(s_j)$ (resp.~defined by $\text{SL}(s_j)$ and $\text{SL}'(s_j)$)  where there is no blue point.

Similarly the two red points of a gadget $\mathcal G_{\approx v}(\sigma_j)$ (resp.~$\mathcal G_{\approx h}(\sigma_j)$) are separated from the blue points: they are in the two regions defined by $\text{VL}'(s_j)$ and $\text{SL}(s_j)$ (resp.~$\text{HL}'(s_j)$ and $\text{SL}'(s_j)$) where there is no blue point.
Two consecutive gadgets $\mathcal G_{\approx v}(\sigma_j)$ and $\mathcal G_{\approx v}(\sigma_{j+1})$ (resp.~$\mathcal G_{\approx h}(\sigma_j)$ and $\mathcal G_{\approx h}(\sigma_{j+1})$) do not interact.
In fact, all the blue points land in the quadrangular faces touching two consecutive gadgets.

\medskip

\textbf{If $6k+14$ lines are sufficient, then $\mathcal I$ is a YES-instance for \shss.}
Let $S$ be a feasible solution consisting of at most $6k+14$ lines.
The lines of $S$ should separate \emph{all} the straight-line segments whose one extremity is at a red point and the other is at a blue point.
We call such a segment a \emph{red/blue segment} or a \emph{red/blue pair} (or simply \emph{pair}).

First, we can assume that $F \subseteq S$, where $F$ is the set of $14$ lines forced by the $28$ outermost long alleys.
Indeed, in each of those long alleys there should be a line of $S$ separating at least two red/blue segments, such that the two segments and the line have \emph{not} a common intersection.
For every line $\mathcal L$ satisfying this property, the line in $F$ responsible from separating this long alley separates a superset of the red/blue pairs separated by $\mathcal L$; and therefore can be chosen.

We will now focus on a particular subset of red/blue pairs.
Consider the set $\mathcal X$ of the red/blue segments within each of the $12k$ remaining long alleys between two points with the same $x$-coordinate (resp.~$y$-coordinate) in a horizontal alley (resp.~vertical alley), and by generalizing in the natural way this notion for the close to horizontal (resp.~vertical) alleys surrounding $\mathcal G(B)$.
There are $\ell$ such red/blue pairs per long alley, hence $|\mathcal X|=12k\ell$.
We partition the $12k$ long alleys into eight \emph{groups}: $\mathcal A_W$, $\mathcal A_E$, $\mathcal A_N$, $\mathcal A_S$, the axis-parallel long alleys to the west, east, north, and respectively, south of $\Gamma$, and $\mathcal B_W$, $\mathcal B_E$, $\mathcal B_N$, $\mathcal B_S$ the slightly slanted long alleys to the west, east, north, and respectively, south of $\mathcal G(B)$.

\begin{lemma}\label{lem:long-alleys}
  No line separates strictly more than $2\ell$ red/blue pairs of $\mathcal X$.
  Furthermore, the only way for a line to separate $2\ell$ red/blue pairs of $\mathcal X$ is to separate all the red/blue pairs of $\mathcal X$ of two long alleys belonging to a pair in $\{(\mathcal A_W,\mathcal A_E),(\mathcal A_N,\mathcal A_S),(\mathcal B_W,\mathcal B_E),(\mathcal B_N,\mathcal B_S)\}$ (and no other pair of $\mathcal X$).
\end{lemma}
\begin{proof}
Within the same group of long alleys, a line separates at most $\ell$ red/blue pairs of $\mathcal X$.
Indeed, say, the group of long alleys consists of horizontal alleys.
Then a line cannot separate two red/blue pairs sharing the same $x$-coordinate.
Furthermore, it can be observed that to separate within the same group exactly $\ell$ red/blue pairs of $\mathcal X$, the line has to separate the red/blue pairs of the \emph{same} long alley.

We also observe that a line intersects a positive number of red/blue pairs of $\mathcal X$ in at most two groups among $\mathcal A_W$, $\mathcal A_E$, $\mathcal A_N$, and $\mathcal A_S$ (resp.~$\mathcal B_W$, $\mathcal B_E$, $\mathcal B_N$, and $\mathcal B_S$) and at most three of the eight groups.

If a line intersects red/blue pairs of $\mathcal X$ in three groups, then those groups have to be (a) $\mathcal B_W$, $\mathcal B_E$, and $\mathcal A_W$, or (b) $\mathcal B_W$, $\mathcal B_E$, and $\mathcal A_E$, or (c) $\mathcal B_N$, $\mathcal B_S$, and $\mathcal A_N$, or (d) $\mathcal B_N$, $\mathcal B_S$, and $\mathcal A_S$.
Here we use the fact that $\hat h \ll z$.
Hence, all the other gadgets are much closer to the long alleys in $\mathcal A_W$ and $\mathcal A_N$ than to the long alleys in $\mathcal A_E$ and $\mathcal A_S$.
Thus, a line separating red/blue pairs in, say, $\mathcal A_E$ and $\mathcal B_E$ looks horizontal between $\mathcal B_E$  and the west end of $\Gamma$, and therefore cannot separate red/blue pairs in $\mathcal A_W$.

The cases (a), (b), (c), and (d) being symmetric, we only treat case (a).
A line corresponding to case (a), cannot separate $2\ell$ red/blue pairs of $\mathcal X$.
Here we use the fact that the distance between two groups of gadgets is much larger than the size of the gadgets.
So a line $\mathcal L$ separating some red/blue pairs in $\mathcal A_W$ and $\mathcal B_W$ looks horizontal between $\mathcal B_W$ and $\mathcal B_E$.
As the long alleys of $\mathcal B_W$ and $\mathcal B_E$ are slanted by a 5 degree angle, $\mathcal L$ cannot separate more than $100k<\ell$ red/blue pairs of $\mathcal X$ in $\mathcal B_W \cup \mathcal B_E$.
Indeed, a close to horizontal line cannot separate more than a constant (smaller than $50$) number of red/blue pairs of $\mathcal X$ per long alley of $\mathcal B_W \cup \mathcal B_E$.

At this point, one can eventually observe that the only ways to separate $2\ell$ red/blue pairs of $\mathcal X$ with one line, is to separate $\ell$ pairs in $\mathcal A_W$ (resp.~$\mathcal B_W$) and $\ell$ pairs in $\mathcal A_E$ (resp.~$\mathcal B_E$), or $\ell$ pairs in $\mathcal A_N$ (resp.~$\mathcal B_N$) and $\ell$ pairs in $\mathcal A_S$ (resp.~$\mathcal B_S$).
By a previous remark, the separated pairs within a group come from the \emph{same} long alley.
\end{proof}

As the remaining budget is $6k$ lines, it follows from Lemma~\ref{lem:long-alleys} that \emph{all} the lines of $S \setminus F$ have to separate exactly $2 \ell$ pairwise-disjoint red/blue pairs of $\mathcal X$.
Furthermore, in $S \setminus F$, there are $2k$ almost horizontal lines separating one long alley in $\mathcal A_W$ and the other in $\mathcal A_E$, $2k$ almost vertical lines separating one long alley in $\mathcal A_N$ and the other in $\mathcal A_S$, $k$ lines separating one long alley in $\mathcal B_W$ and the other in $\mathcal B_E$, and $k$ lines separating one long alley in $\mathcal B_N$ and the other in $\mathcal B_S$.

Let us draw a small parenthesis.
Despite what is represented in Figure~\ref{fig:overall}, the line of $S$ separating the $h$-th topmost long alley of $\mathcal A_W$ (resp.~the $v$-th leftmost long alley of $\mathcal A_N$) does not necessarily separate the $h$-th topmost long alley of $\mathcal A_E$ (resp.~the $v$-th leftmost long alley of $\mathcal A_S$).
Instead, this line separates \emph{one} long alley of $\mathcal A_E$ (resp. $\mathcal A_S$); it does not matter which one.
Therefore, the exact position of the long alleys of $\mathcal A_E \cup \mathcal A_S$ is not crucial.
What is important is that there are $2k$ horizontal long alleys very far east, and $2k$ vertical long alleys very far south. 
We nevertheless chose to align those alleys with the ones in $\mathcal A_W \cup \mathcal A_N$, since we think it leads to a more intuitive construction for the reader.
This closes the parenthesis.

Let us focus on the $k$ lines of $S$ separating the $k$ topmost long alleys of $\mathcal A_W$.
For each $j \in [k]$, we denote by $\mathcal L_j$ the one separating the $j$-th bottommost of those $k$ long alleys.
As we already observed those lines behave like horizontal lines in the smallest subgrid enclosing all the gadgets which are \emph{not} in $\mathcal A_E \cup \mathcal A_S$ (nor the $14$ outermost long alleys).
For each $j \in [k]$, let $a^j_{u_j}$ be the element of $A$ corresponding to $\mathcal L_j$ (with the correspondence described in Figure~\ref{fig:interval-gadget}).
In particular, by the position of the $k$ topmost long alleys of $\mathcal A_W$, it is indeed true that the $k$ lines $\mathcal L_1, \mathcal L_2, \ldots, \mathcal L_k$ translates to exactly one element per color class of track $A$.
We show, thanks to the following lemma, that $a^1_{u_1}$, $a^2_{u_2}, \ldots, a^k_{u_k}$ is a hitting set of $(A,\mathcal S_A)$.

\begin{lemma}\label{lem:simple-interval}
The only ways to separate a simple interval gadget with one horizontal line and one vertical line is to make them meet at the diagonal defined by the blue points.  
\end{lemma}
\begin{proof}
  If the lines meet \emph{above} the diagonal, then the bottom red point is not separated from the blue point just to the right of the vertical line.
  If the lines meet \emph{below} the diagonal, then the top red point is not separated from the blue point just to the left of the vertical line.
\end{proof}

Recall that we added for convenience the pairs of red points $R([a^j_1,a^j_t])$, for each $j \in [k]$.
We consider the simple interval gadget that each pair induces, that is, the two red points and the diagonal of blue points contained in the smallest square subgrid enclosing them.
Because of the long alleys in $\mathcal A_W$ and $\mathcal A_N$, we have a budget of exactly one horizontal line and one vertical line to separate each of those $k$ simple intervals.
By Lemma~\ref{lem:simple-interval}, the $k$ vertical lines of $S$ separating the $k$ leftmost long alleys of $\mathcal A_N$ have to \emph{agree} with the choices of the horizontal lines $\mathcal L_j$'s.
More formally, the $j$-th bottommost horizontal line intersects the $j$-th leftmost vertical line at the diagonal defined by the blue points.

This implies that all the intervals of $\mathcal S_A$ are hit by the $a^j_{u_j}$'s.
Indeed, if an interval $I$ is not hit, the smallest square subgrid $\gamma_I$ enclosing the corresponding pair of red points would not be intersected by $S$; and those red points would not be separated from any diagonal blue point in $\gamma_I$.

We now show that the choice of the lines corresponding to the elements $a^1_{u_1}$, $a^2_{u_2}, \ldots, a^k_{u_k}$ will force to take the lines corresponding to the elements $b^1_{u_1}$, $b^2_{u_2}, \ldots, b^k_{u_k}$.
Still by Lemma~\ref{lem:simple-interval}, $\mathcal G(\sigma)$ transmits the choices of the $\mathcal L_j$'s \emph{downwards} with the desired permutation $\sigma$ of the color classes, while $\mathcal G(\text{id})$ transmits unchanged the choices of the vertical lines separating $\mathcal G(\sigma)$ \emph{to the left}. 

Similarly to the simple argument of Lemma~\ref{lem:simple-interval}, once the axis-parallel line has been selected in a gadget $\mathcal G_{\approx v}(\sigma_j)$ or $\mathcal G_{\approx h}(\sigma_j)$, to separate the two red points from the four blue points on or close to the intended line (that is, $\text{SL}(s)$ when $\text{VL}'(s)$ has been selected, or $\text{SL}'(s)$ when $\text{HL}'(s)$ has been selected), one should choose the intended line itself or any line having the same intersection with the axis-parallel line and closer to this axis (see Figure~\ref{fig:permutation-within-a-class}).
The way the gadgets $\mathcal G_{\approx v}(\sigma_j)$'s, $\mathcal G_{\approx h}(\sigma_j)$'s, and $\mathcal G(B)$ are placed, it results in, for each color class $j$ of track B, selecting a relatively horizontal line somewhere to the left of the line corresponding to $b^j_{u_j}$, and selecting a relatively vertical line somewhere below the line corresponding to $b^j_{u_j}$ (see Figure~\ref{fig:2ineq}).

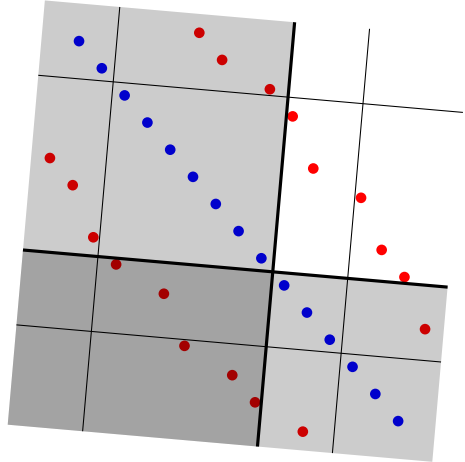
\begin{figure}[h!]
  \centering
  \begin{tikzpicture}[rotate=85]
  \def\s{3}
    \def\d{0.5}
    \def\x{15}

\foreach \i in {1,...,\x}{
  \node[bp] at (\i / \s,\i / \s) {} ;
}

\foreach \i/\j in {0/4,1/6,2/7,3/9,5/10,6/12,7/13,9/14,10/15}{
  \node[rp] at (\i / \s + \d / \s - 0.25 / \s,\j / \s + \d / \s + 0.25 / \s) {} ;
  \node[rp] at (\j / \s + \d / \s + 0.25 / \s,\i / \s + \d / \s - 0.25 / \s) {} ;
}

\foreach \i in {7}{
  \draw[very thick] (\i / \s - 0.5 / \s,- 0.5 / \s) -- (\i / \s - 0.5 / \s,\x / \s + 1.5 / \s) ;
  \draw[very thick] (- 0.5 / \s,\i / \s - 0.5 / \s) -- (\x / \s + 1.5 / \s,\i / \s - 0.5 / \s) ;

  \fill[opacity=0.2] (\i / \s - 0.5 / \s,- 0.5 / \s) -- (\i / \s - 0.5 / \s,\x / \s + 1.5 / \s)
                  -- (\i / \s - 7.5 / \s,\x / \s + 1.5 / \s) -- (\i / \s - 7.5 / \s,- 0.5 / \s) ;
  \fill[opacity=0.2] (- 0.5 / \s,\i / \s - 0.5 / \s) -- (\x / \s + 1.5 / \s,\i / \s - 0.5 / \s)
                  -- (\x / \s + 1.5 / \s,\i / \s + 9.5 / \s) -- (- 0.5 / \s,\i / \s + 9.5 / \s);
}
\foreach \i in {4,14}{
  \draw[thin] (\i / \s - 0.5 / \s,- 0.5 / \s) -- (\i / \s - 0.5 / \s,\x / \s + 1.5 / \s) ;
  \draw[thin] (- 0.5 / \s,\i / \s - 0.5 / \s) -- (\x / \s + 1.5 / \s,\i / \s - 0.5 / \s) ;
}

  \end{tikzpicture}
  \caption{In bold, the horizontal and vertical lines in $\mathcal G(B)$ corresponding to selecting some element $b_i^j$ of $B$.
    The grey regions materialize the potential positions for the slanted line in $\mathcal G_{\approx v}(\sigma_j)$ and the slanted line in $\mathcal G_{\approx h}(\sigma_j)$ once the lines corresponding to selecting $a_i^j$ have been chosen.}
  \label{fig:2ineq}
\end{figure}

Though, by Lemma~\ref{lem:simple-interval}, those two lines have to meet at the diagonal formed by the blue points.
The only way to realize that is that both lines agree on the choice of $b^j_{u_j}$.
This concludes to prove that choosing the lines corresponding to $a^1_{u_1}, a^2_{u_2}, \ldots, a^k_{u_k}$ to separate $\mathcal G(A)$ forces to select the lines corresponding to $b^1_{u_1}, b^2_{u_2}, \ldots, b^k_{u_k}$ to separate $\mathcal G(B)$.
Finally, as we already observed for track A, the $b^j_{u_j}$'s have to be a hitting set of $(B,\mathcal S_B)$; otherwise, the non hit interval would induce some non separated red/blue pairs.

As $a^1_{u_1}, a^2_{u_2}, \ldots, a^k_{u_k}$ is a hitting set of $(A,\mathcal S_A)$ and $b^1_{u_1}, b^2_{u_2}, \ldots, b^k_{u_k}$ is a hitting set of $(B,\mathcal S_B)$, $\mathcal I$ is a YES-instance.
 } \end{proof}

\section{FPT Algorithm Parameterized by Size of Smaller Set}
\label{sec:fpt}

We present a simple FPT algorithm for \hvrbs parameterized by  $\min\{|\mathcal R|,|\mathcal B|\}$.
In the following, w.l.o.g., we assume that $\mathcal B$ is the smaller set.
%The main result is stated in the following

\begin{theorem}\label{thm:fptb}

An optimal solution of \hvrbs can be computed in $O(n\log n + n|\mathcal B|9^{|\mathcal B|})$ time.

\end{theorem}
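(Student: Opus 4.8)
The plan is to (i) shrink the pool of useful lines to $O(|\mathcal B|)$ in each direction, (ii) branch over a constant amount of local information attached to every blue point, and (iii) observe that each branch is resolved by a single \textsc{2-SAT} call.

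\textbf{Preprocessing and candidate lines.} First I would sort $\mathcal R\cup\mathcal B$ by $x$- and by $y$-coordinate in $O(n\log n)$ time; after a symbolic perturbation we may assume the $2n$ coordinates are distinct. A vertical (resp.\ horizontal) line can be translated, without changing which red--blue segments it crosses, into the open ``slot'' between two consecutive $x$- (resp.\ $y$-) coordinates, so it suffices to pick lines from slots. Next I would prove an exchange/domination lemma: there is an optimal solution in which every vertical line lies either in the slot immediately left, or in the slot immediately right, of the $x$-coordinate of some blue point (and symmetrically for horizontal lines). Indeed, if two vertical lines of the solution fall strictly inside the same maximal gap delimited by consecutive blue $x$-coordinates (or both lie left of all blue points, or both right), then one of them is redundant: for every blue point $b$, a vertical line lying between $b$ and a red point may be replaced by the solution line closest to $b$ on that side, which still crosses the segment; so the ``inner'' line of the gap can be dropped. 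Hence there are at most $2|\mathcal B|$ candidate vertical slots and at most $2|\mathcal B|$ candidate horizontal slots, two attached to each blue point, and I would add two dummy blue points ``at infinity'' so that the unbounded directions are handled uniformly.

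\textbf{Branching and reduction to \textsc{2-SAT}.} For each (real or dummy) blue point $b$ I would branch on its \emph{local configuration}: three possibilities describing, in the $x$-direction, which candidate slot is the right boundary of the cell of $b$ (the slot just right of $b$, the slot just left of the next blue point in $x$-order, or neither of these, i.e.\ the cell of $b$ is merged with that of its successor), and symmetrically three possibilities in the $y$-direction --- giving $9$ branches per blue point and $9^{|\mathcal B|+2}$ branches in total. Fix a branch. Once the local configurations are fixed, the cell of each blue point is an axis-parallel rectangle whose four sides come from a bounded, inclusion-ordered list of candidate lines, and validity of the separation reduces to: for every blue point $b$ and every red point $r$, $r$ does not lie in the cell of $b$. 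For a fixed $b$ this is, in each of the four open quadrants around $b$, the emptiness of one specific axis-parallel rectangle spanned by $b$ and the two relevant cell-sides (e.g.\ the rectangle between $b$ and $v^{+}(b),h^{+}(b)$ in the north-east quadrant); and since a red point lies inside such a rectangle exactly when it is on the inner side of \emph{both} the relevant vertical and the relevant horizontal boundary, forbidding it is a disjunction of two literals of the form ``this vertical candidate is taken'' / ``this horizontal candidate is taken'' --- a $2$-clause. Adding the $O(|\mathcal B|)$ clauses that enforce consistency of the local configurations of consecutive blue points yields a \textsc{2-SAT} instance on $O(|\mathcal B|)$ variables and $O(n|\mathcal B|)$ clauses whose satisfying assignments are exactly the valid separations compatible with the branch, and from which the minimum number of lines is read off directly, the boundary of each blue cell being forced by the configurations up to the bounded, inclusion-ordered list above.

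\textbf{Assembling, and the main obstacle.} For each of the $9^{|\mathcal B|+2}$ branches I would build the \textsc{2-SAT} instance in $O(n|\mathcal B|)$ time, decide it in linear time via strongly connected components, keep the cheapest feasible branch, and output the corresponding slots; correctness follows from the domination lemma (some optimal solution survives in some branch) and the fact that each branch only produces valid separations, for a total of $O(n\log n)+9^{|\mathcal B|+2}\cdot O(n|\mathcal B|)=O(n\log n + n|\mathcal B|\,9^{|\mathcal B|})$. The step I expect to be the main obstacle is pinning down the branching: one has to verify that a constant number of local configurations per blue point ($9$, after discarding equivalent or infeasible combinations) really captures all relevant behaviour --- in particular that it is safe never to record how far an ``unbounded'' cell boundary actually extends --- and that the genuinely two-dimensional interaction between the horizontal and the vertical lines, which is exactly what defeats a one-dimensional argument, is still expressible with binary clauses once the local configurations are fixed. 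Careful bookkeeping for tied coordinates and for the dummy points is the other place where care is needed, to be sure that no valid, and in particular no optimal, solution is lost.
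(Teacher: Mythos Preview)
Your exchange/domination step is where the argument breaks. You assert that if two vertical lines of a solution fall inside the same gap between consecutive blue $x$-coordinates, then one of them is redundant and can be dropped. This is false: take $\mathcal B=\{(0,0),(2,0)\}$ and $\mathcal R=\{(1,1),(1,-1)\}$. The optimum is two lines, and $\{x=0.5,\ x=1.5\}$ is an optimal solution with both verticals in the single gap $(0,2)$; dropping either one leaves a red--blue pair unseparated. What the paper proves (its Lemma) is only that a strip carries at most \emph{two} lines (the middle of three is redundant) and that \emph{if} there are two they may be pushed to the two blue-adjacent ends. Nothing is claimed about where a \emph{single} line in a strip must sit, and your argument does not establish that it can be taken blue-adjacent; without that, restricting to $2|\mathcal B|$ candidate slots per direction is unjustified, and your $O(|\mathcal B|)$-variable \textsc{2-SAT} cannot express ``the unique line in this strip lies between red points $p$ and $p'$''.

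This is exactly where the paper's proof diverges from your plan. The paper also branches into $9^{O(|\mathcal B|)}$ cases, but the three options per strip are simply ``$0$, $1$, or $2$ lines''. Strips with two lines are then fully fixed (push both to the ends). The crucial remaining freedom --- the position of the single line in each one-line strip, which may legitimately be anywhere among $\Theta(n)$ slots --- is encoded in a \textsc{2-SAT} instance with $O(n)$ variables, one boolean $x^j_p$ (resp.\ $y^i_p$) per red point $p$ lying in a one-line vertical (resp.\ horizontal) strip, meaning ``the line of that strip is to the left of (resp.\ below) $p$''. Coherence clauses order these variables within a strip, and then for each red--blue pair in an ``interesting'' cell the separation requirement is a disjunction of at most one vertical and one horizontal such literal, i.e.\ a $2$-clause. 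So the $9$ per blue point and the reduction to \textsc{2-SAT} are both present in the paper, but the variables are attached to red points rather than to blue-adjacent slots; that shift is precisely what lets the argument go through without your unproved domination lemma.
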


We first give a
high-level description of the algorithm. It begins by subdividing 
the plane into at most $|\mathcal B|+1$ vertical strips, each consisting of the area
``between'' two horizontally successive blue points, and at most $|\mathcal B|+1$
horizontal strips, each consisting of the area ``between'' two vertically
successive blue points (see Figure \ref{fig:empty-corner}). Since each strip
can contain only red points in its interior, an optimal solution uses at most two lines inside
a single strip (Lemma~\ref{lem:atmost2_and_easy}(a)). We can therefore guess (by exhaustive
enumeration) the number of lines used in each strip in an optimal solution.
This gives a running time of roughly $9^{|\mathcal B|}$. A second observation
is that if an optimal solution uses two lines in a strip, these can be placed
as far away from each other as possible (Lemma~\ref{lem:atmost2_and_easy}(b)).
To complete the solution we must decide where to place the lines in
strips that contain only one line of an optimal solution. We consider every
pair of blue and red points whose separation may depend on the exact placement
of these lines. The key idea is that the separation of two such points can be
expressed as a \textsc{2-CNF} constraint.
If the upcoming formal exposition seems a bit more complicated than this informal idea, it is because we have to deal with points sharing the same x- or y-coordinates.
\begin{comment}
  : if the red point is not
already separated from the blue point (and provided it is not on the boundary of a strip), it must be inside a vertical and an horizontal
strip each containing at most one line from an optimal solution; for each such line only one of its two possible placements relative to the blue point is such that it separates the pair. If the red point is on the boundary of a strip, at most two strips are relevant for any position of the blue point, each strip may again contain at most one line from an optimal solution and every placement of a line inside such a strip can separate the pair. The last step of our algorithm is to produce a \textsc{2-SAT} instance
which is satisfiable if and only if there exists a placement of exactly one
line in each remaining strip that produces a feasible solution.
\end{comment}

%Since \textsc{2-SAT} can be solved in linear time, the part that dominates our
%algorithm's running time is the guessing of which strips contain one or two
%lines of the optimal solution.

  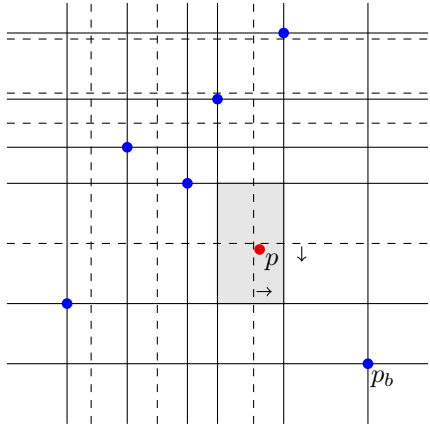
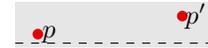
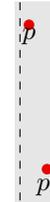
\begin{figure}[t!]
  \centering
\begin{minipage}{0.55\textwidth}
  \centering
  \begin{tikzpicture}[scale=0.8]
    \def\xb{-1}
    \def\xe{6}
    \def\yb{-1}
    \def\ye{6}

    \node at (5.25,-0.25) {$p_b$} ;
    
    \foreach \i/\j in {0/1,1/3.6,2/3,5/0,2.5/4.4,3.6/5.5}{
      \draw (\xb,\j) -- (\xe,\j) ;
      \draw (\i,\yb) -- (\i,\ye) ;
      \node[bp] at (\i,\j) {} ;
    }

    \foreach \i in {0.4,1.5,3.1}{
      \draw[dashed] (\i,\yb) -- (\i,\ye) ;
    }
    \foreach \j in {2,4,4.5,5.4}{
      \draw[dashed] (\xb,\j) -- (\ye,\j) ;
    }

    \node[rp] at (3.2,1.9) {} ;
    \node at (3.4,1.7) {$p$} ;

    \draw[fill,opacity=0.1,very thick] (2.5,1) -- (2.5,3) -- (3.6,3) -- (3.6,1) -- cycle ;

    \draw[->] (3.9,1.95) -- (3.9,1.7) ;
    \draw[->] (3.15,1.2) -- (3.40,1.2) ;
    
  \end{tikzpicture}
  \subcaption{The cell decomposition (solid lines), a guess of how $S$ intersects it (dashed lines), and an interesting cell (in gray) for a point $p_b$ (bottom-right corner).
    The red point $p$ cannot be in the south-east quadrant of this cell which translates to the $2$-clause $y^2_p \lor \neg x^4_p$.
    Indeed, it should be that the horizontal line of $S$ is below it or that the vertical line is to its right.}
  \label{fig:empty-corner}
\end{minipage}
\qquad
\begin{minipage}{0.38\textwidth}
    \centering
    \begin{tikzpicture}[scale=0.6]
      \def\xb{-0.5}
      \def\xe{4}
      \def\yb{-0.3}
      \def\ye{0.7}
      
      \node[rp] at (0,0) {} ;
      \node[rp] at (3.2,0.4) {} ;

      \node at (0.25,0) {$p$} ;
      \node at (3.5,0.4) {$p'$} ;

      \fill[opacity=0.1] (\xb,\yb) -- (\xb,\ye) -- (\xe,\ye) -- (\xe,\yb) -- cycle ;

      \draw[dashed] (\xb,-0.2) -- (\xe,-0.2) ;

  \end{tikzpicture}
    \subcaption{Two consecutive red points in a horizontal strip $\mathcal R_h(i)$. If the corresponding line of $S$ is below $p$, then it is also below $p'$ which translates to $y^i_p \rightarrow y^i_{p'}$.}
    \label{fig:two-clause-consistency-h}

\qquad
    
        \centering
        \begin{tikzpicture}[scale=0.6,rotate=270]
          \def\xb{-0.5}
          \def\xe{4}
          \def\yb{-0.3}
          \def\ye{0.7}
      
      \node[rp] at (0,0) {} ;
      \node[rp] at (3.2,0.4) {} ;

      \fill[opacity=0.1] (\xb,\yb) -- (\xb,\ye) -- (\xe,\ye) -- (\xe,\yb) -- cycle ;

      \draw[dashed] (\xb,-0.2) -- (\xe,-0.2) ;

      \node at (0.25,0) {$p$} ;
      \node at (3.5,0.4) {$p'$} ;
      
  \end{tikzpicture}
    \subcaption{Two consecutive red points in a vertical strip $\mathcal R_v(j)$. If the corresponding line of $S$ is to the left of $p$, then it is also to the left of $p'$ which translates to $x^i_p \rightarrow x^i_{p'}$.}
    \label{fig:two-clause-consistency-v}
\end{minipage}
\caption{Illustration of the algorithm and the two kinds of clauses of the \textsc{2-SAT} instance.}
\label{fig:different-two-clauses}
\end{figure}

We now proceed to a formal description of our algorithm, beginning  with some definitions.
For a point $p\in\mathbb R^2$, let $p(x)$ and $p(y)$ be its $x$-coordinate and $y$-coordinate, respectively.
Also, let $X, Y$ be the sets of $x$, $y$ coordinates of the points in $\mathcal B$. In order to ease presentation later on, with a slight terminology abuse, we add $-\infty, +\infty$ to both $X$ and $Y$. Let $X(i), Y(i)$ be the respective $i$-th elements of these sets in increasing order with $0 \leqslant i$, and let $k = |X|-2$ and $l = |Y|-2$; $k\leqslant |B|$ and $l\leqslant |B|$.

\begin{definition}
The vertical strips are defined as $V_i = \{p\in\mathbb R^2 \mid  X(i) \leqslant p(x) \leqslant X(i+1)\}$ for $i\in [0, k]$. %while $V_0 = \{p\in\mathbb R^2 \mid p(x) \leqslant X(1)\}$ and $V_k = \{p\in\mathbb R^2 \mid X(k)\leqslant p(x)\}$. 
\end{definition}
\begin{definition}
The horizontal strips are defined as $H_i = \{p\in\mathbb R^2 \mid  Y(i) \leqslant p(y) \leqslant Y(i+1)\}$ for $i\in [0, l]$. %while $H_0 = \{p\in\mathbb R^2 \mid p(y) \leqslant Y(1)\}$ and $H_l = \{p\in\mathbb R^2 \mid Y(l)\leqslant p(y)\}$. 
\end{definition}

The horizontal and vertical strips defined above essentially partition the
plane into open monochromatic (red) or empty regions, while the boundaries of the strips may contain both red and blue points.
As a result, we have the following properties of an optimal solution.
%In the lemmas below, a vertical line
%$x=x_0$ has a point $p$ to the left of it if $p$ has an $x$ coordinate smaller
%than $x_0$. Similarly, a horizontal line $y=y_0$ has $p$ below it if $p$ has
%$y$-coordinate smaller than $y_0$.

\begin{lemma}\label[lemma]{lem:atmost2_and_easy}
(a) An optimal solution of \hvrbs contains at most two lines in each horizontal or vertical strip. (b) In the case where a strip has two lines, these lines can be assumed to be placed in a way such that all red points in the interior of the strip lie between them.
\end{lemma}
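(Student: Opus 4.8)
The plan is to argue both parts by a simple exchange argument, exploiting the fact that each strip's interior is monochromatic (contains only red points, or none). First I would fix an optimal solution $S$ and a single vertical strip $V_i$; the horizontal case is symmetric. For part (a), suppose for contradiction that $S$ contains three or more lines whose intersection with the interior of $V_i$ is nonempty. Note that any red/blue segment that is separated by one of these lines \emph{within} $V_i$ must have both endpoints in the closed strip $V_i$; but the interior of $V_i$ contains no blue point, so at least one endpoint of such a segment lies on the boundary (i.e.\ on a line $x = X(i)$ or $x = X(i+1)$). I would then show that among the $\geq 3$ lines meeting the interior, one of them — say the one that is ``innermost'' on one side, or more carefully a well-chosen one — can be deleted and its separating duties taken over by the two remaining extreme lines of the strip together with the lines bounding neighbouring strips. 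The cleanest way: keep only the leftmost and rightmost of the lines meeting $\mathrm{int}(V_i)$ (pushed to the boundary as in part (b)); every red point of $\mathrm{int}(V_i)$ still lies between them, so it is separated from every blue point on the two vertical boundary lines and from every blue point outside $V_i$, which is all the separation these interior lines were ever responsible for. Hence $S$ was not optimal, a contradiction — so at most two lines meet the interior, and any further lines of $S$ can be taken to lie on strip boundaries and counted with a neighbouring strip.

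For part (b), assume $S$ uses exactly two lines $\ell_1, \ell_2$ with $x$-coordinates $c_1 \leqslant c_2$ meeting $\mathrm{int}(V_i)$. I would move $\ell_1$ leftward until it is arbitrarily close to the boundary $x = X(i)$ (just inside it, so as not to pass through a blue point there — recall separation is strict, so we perturb by an infinitesimal amount, or equivalently place it at $X(i) + \varepsilon$ for suitably small $\varepsilon$), and symmetrically move $\ell_2$ rightward to $X(i+1) - \varepsilon$. Sliding $\ell_1$ to the left only \emph{grows} the set of red/blue pairs it separates whose blue endpoint lies to the left of $V_i$ or on $x = X(i)$, and it never loses a pair: any red point of $\mathrm{int}(V_i)$ that was to the right of $\ell_1$ stays to the right. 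The only subtlety is red points that lie exactly on the boundary line $x = X(i+1)$ (or $X(i)$): these are handled because such a point is already separated from blue points on the same boundary by the adjacent strip's lines, or is genuinely interior to a neighbouring strip. After these two moves, all red points in $\mathrm{int}(V_i)$ lie strictly between $\ell_1$ and $\ell_2$, as claimed, and the solution remains feasible and of the same size.

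The main obstacle I anticipate is the bookkeeping around points that share an $x$- or $y$-coordinate with a blue point, i.e.\ points lying on strip boundaries — exactly the complication the authors flag in the paragraph preceding the lemma. One must be careful that ``pushing a line to the boundary'' does not make it coincide with a boundary (violating strict separation) and does not inadvertently un-separate a red/blue pair whose red point sits on that boundary. I would handle this by treating the open strips and their shared boundary lines separately: a red point on a boundary line is the joint responsibility of the two strips flanking it, and for such a point the monotonicity argument above still applies on at least one side. Everything else is routine: the exchange moves are strictly local to one strip, feasibility is checked pair-by-pair, and no new lines are introduced, so optimality is preserved.
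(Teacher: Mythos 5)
Your proposal is correct and follows essentially the same exchange argument as the paper: since strip interiors contain only red points, any line strictly between two others in a strip is redundant, and the two extreme lines can be slid toward the strip boundaries so that all interior red points lie between them. One small slip — the claim that sliding $\ell_1$ leftward ``never loses a pair'' is literally false (it loses pairs whose red endpoint was left of $\ell_1$ and whose blue endpoint lies beyond $x=X(i+1)$), but such pairs are separated by $\ell_2$, as your global argument in part (a) already shows, so the proof stands.
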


\lv{
\begin{proof}
Recall that our notion of separation forbids lines from passing through input points.
As the interior of every strip contains only red points, in any solution, every line that is between two other lines within a strip can be safely removed without affecting feasibility. Moreover, two lines within a strip can be translated in opposite directions towards the boundaries of the strip such that they enclose between them all red points that lie in the interior of the strip but no blue point. 
\end{proof}
}

%We now observe that if the optimal solution places two lines in a strip they
%may without loss of generality enclose all red points of the strip.

%\begin{lemma}\label[lemma]{lem:easy2}

%For any instance of \hvrbs there exists an optimal solution $S$ with the
%following property: for any $i\in [0,b]$ if there exist two horizontal lines
%$\ell_1,\ell_2\in S$ with exactly $i$ points of $\mathcal B$ below them, then
%all the points of $\mathcal R_h(i)$ are between $\ell_1,\ell_2$. Furthermore,
%for all $i\in [0,b]$ if there exist two vertical lines $\ell_1,\ell_2\in S$ with exactly
%$j$ points of $\mathcal B$ to the left of them, then all the points of
%$\mathcal R_v(j)$ are between $\ell_1,\ell_2$.

%\end{lemma}

%\lv{
%\begin{proof}
%  \input{proofLemma6}
%\end{proof}
%}

We are now ready to give the proof of the main theorem of this section.

\begin{proof}[Proof of Theorem \ref{thm:fptb}]

We describe an FPT algorithm which guesses how many lines an optimal solution
uses in each strip and then produces a \textsc{2-SAT} instance of size
$O(|\mathcal B|n)$ in order to check if its guess is feasible. We assume that
we have access to two lists containing the input points sorted lexicographically by their $(x, y)$ and $(y, x)$ coordinates; producing these lists takes $O(n\log n)$ time.

Let $S$ be some optimal solution. 
%We first guess, for each $i\in [0,l]$, $j\in [0,k]$ how many horizontal, vertical lines of $S$ are in $H_i$ and $V_j$ respectively. 
We first guess how many lines of $S$ are in each horizontal and each vertical strip.
Since, by Lemma \ref{lem:atmost2_and_easy}, $S$ contains at most two lines per strip, and there are $l+1 \leqslant |\mathcal B|+1$ horizontal strips and $k+1 \leqslant |\mathcal B|+1$ vertical strips, there are at most $3^{|\mathcal B|+1}$ possibilities to guess from for each direction thus, $O(9^{|\mathcal B|})$ in total.

In what follows, we assume that we have fixed how many lines of $S$ are in each strip.
We describe an algorithm deciding in polynomial time if such a specification gives a feasible solution.  Since a specification fully determines the number of lines of
a solution, the algorithm simply goes through all specifications and selects one with minimum cost among all feasible ones.

We now produce a \textsc{2-SAT} instance which will be satisfiable if and only
if a given specification is feasible. We first define the variables: 
for each horizontal strip $H_i$ that contains exactly one line from $S$ and for each red point $p\in H_i$, 
we define a variable $y^i_p$. Its informal meaning is ``the line of $S$ in $H_i$
is below point $p$''. Note that when $p$ lies on the upper (lower) boundary of $H_i$, $y^i_p$ is set to true (false) by default.  Similarly, for each vertical strip $V_j$ that contains exactly one line from $S$ and for each red point $p\in V_j$, we define a variable $x^j_p$.  Its informal meaning is ``the line of $S$ in $V_j$ is to the left of $p$''. It is set to true (false) by default when $p$ lies on the right (left) boundary of $V_j$. We have constructed $O(n)$
variables (at most four for each point of $\mathcal R$).

Next, we construct \textsc{2-CNF} clauses imposing the informal meaning
described. For each strip $H_i$ that contains exactly one horizontal line from $S$ and each pair of red points $p, p'\in H_i$ that are consecutive in lexicographic $(y, x)$ order, we add the clause $(y^i_p \to y^i_{p'})$. Note that we can skip pairs that have a point lying on the upper or lower boundary of $H_i$ as the corresponding variable has been already set to true or false respectively and the clause is satisfied; see the description in the previous paragraph. Similarly, for each strip $V_j$ that contains exactly one vertical line from $S$ and each pair of red points $p, p'\in V_j$ that are consecutive in lexicographic $(x, y)$ order, we add the clause $(x^j_p \to x^j_{p'})$; as before, pairs that have a point lying on the left or right boundary of $V_j$ do not produce any clauses. Observe that given any solution, we can construct from its lines an assignment following the informal meaning described above that satisfies all clauses added so far, while from any satisfying assignment we can find lines according to the informal meaning.  We call the $O(n)$ clauses constructed so far the coherence part of our instance.

%For each $i\in [0,b]$ such that $S$ contains exactly one horizontal
%line with exactly $i$ blue points below it, and for each $p,p'\in \mathcal
%R_h(i)$ such that $p$ is below $p'$, and there is no point whose $y$ coordinate
%is between those of $p,p'$, we add the clause $(y^i_p \to y^i_{p'})$. We note
%that this can be done in time $O(n)$ by traversing the list that contains all
%points sorted by $y$ coordinate, since $p,p'$ are consecutive in this list.
%Similarly, for each $j\in [0,b]$ such that $S$ contains exactly one vertical
%line with exactly $j$ blue points to its left, and for each $p,p'\in \mathcal
%R_v(j)$ such that $p$ is to the left of $p'$ and no point has $x$ coordinate
%between those of $p,p'$, we add the clause $(x^j_p \to x^j_{p'})$.  Observe
%that from any (vertical or horizontal) line in a solution we can construct an
%assignment following the informal meaning described above satisfying all
%clauses added so far, while from any satisfying assignment we can find lines
%according to the informal meaning.  We call the $O(n)$ clauses constructed so
%far the coherence part of our instance.

What remains is to add some further clauses to our instance to ensure not only
that each satisfying assignment encodes a solution, but also that the solution
is feasible, that is, it separates all pairs of red and blue points. 
%In the following, we will call set $H_i\cap V_j$ the $(i,j)$-cell, for $i\in[0,l]$ and $j\in[0,k]$. 

Consider a cell $C_{ij} = H_i\cap V_j$, where $i\in[0,l]$ and $j\in[0,k]$. 
A red point $p\in C_{ij}$ is called $C_{ij}$-\emph{separable} for a point $p_b\in \mathcal B$, if $p$ can be separated from $p_b$ by a vertical or horizontal line running through the interior of $C_{ij}$.
%: $p$ is not a corner point of $C_{ij}$ or $p$ is not on the same side of $H_i$ and $V_j$ as $p_b$.
We will sometimes call $p$ just separable when $C_{ij}$ and $p_b$ are obvious from the context. 
We say that $C_{ij}$ is \emph{interesting} for a point $p_b\in \mathcal B$ if the
following conditions hold: (i) $C_{ij}$ contains at least one red point that is $C_{ij}$-separable for $p_b$; (ii) at least one of $H_i$ or $V_j$ contains at most one horizontal or one vertical line from $S$ respectively; (iii) if $X(j+1) < p_b(x)$ or $p_b(x) < X(j)$, then there is no vertical line from $S$ in a strip between $p_b$ and $V_j$; and (iv)  if $Y(i+1) < p_b(y)$ or $p_b(y) < Y(i)$, then there is no horizontal line from $S$ in a strip between $p_b$ and $H_i$. Note that even if $C_{ij}$ is interesting for $p_b$, it may contain a red point $p$ that is \emph{already separated} from $p_b$ by a line going through $C_{ij}$: this happens exactly when $H_i$ or $V_j$ contains two horizontal or vertical lines from $S$ respectively and $p$ lies either in the interior of $C_{ij}$ or on its boundary but not on the same side of $H_i$ or $V_j$ as $p_b$. 

The motivation behind these definitions is that the cells that are interesting for $p_b$ contain exactly the red points that need to be separated from $p_b$ by lines going through the cells and whose positions cannot be predetermined. We therefore have to add some clauses to express these constraints. 

%Consider a point $p_b\in \mathcal B$ that has exactly $j-1$ blue points to its
%left and $i-1$ blue points below it (i.e. that $i$-th blue point in terms of
%$y$-coordinate and the $j$-th blue point in terms of $x$-coordinate). We
%say that the $(i',j')$-cell is \emph{interesting} for point $p_b$ if the
%following hold: (i) $S$ contains at most $1$ horizontal line with exactly $i'$
%blue points below it and at most $1$ vertical line with exactly $j'$ blue
%points to its left; (ii) for all $i''$ such that $\min\{i,i'\} < i'' <
%\max\{i,i'\}$, $S$ contains no horizontal lines with exactly $i''$ blue points
%below them, and for all $j''$ such that $\min\{j,j'\} < j'' < \max\{j,j'\}$,
%$S$ contains no vertical lines with exactly $j''$ blue points to their left.
%The motivation for these definitions is that if one of (ii), (iii) or (iv) is
%violated (either $C_{ij}$ has two horizontal and two vertical lines running through it, or
%there exists a strip strictly between $p_b$ and $C_{ij}$ that contains a line)
%then from Lemma \ref{lem:atmost2_and_easy} and the high-level specification of the
%solution, $p_b$ will be separated from all red points in $C_{ij}$; except, of course, in the case of (ii), from points that are not $C_{ij}$-separable for $p_b$. Otherwise, if the cell has at least one separable red point, it is interesting, i.e., it is not obvious that its (not already separated) red points are separated from $p_b$, therefore we have to add some clauses
%to express this constraint.

For each $p_b\in \mathcal B$ and each cell $C_{ij}$ that is interesting for $p_b$ we construct a clause for every
red point $p\in C_{ij}$ that is separable and not already separated from $p_b$. Initially, the clause is empty. If the specification says that there is exactly one line from $S$ in $H_i$, we add to the clause a literal as follows: if $y(p_b) \geqslant Y(i+1)$, we add $\neg y^{i}_p$ (meaning that the horizontal line is above $p$, and hence separates $p$ from $p_b$); if $y(p_b) \leqslant Y(i)$, we add $y^{i}_p$. Furthermore, if the specification says that there is exactly one line from $S$ in $V_j$, we add to the clause a literal as follows: if $x(p_b) \geqslant X(i+1)$, we add the literal $\neg x^{j}_p$; if $x(p_b) \leqslant X(i)$, we add $x^{j}_p$.
%with exactly $i'$ blue points below
%it then we add to the clause a literal as follows: if $i> i'$ (that is, if
%$p_b$ is above all the points of the $(i',j')$-cell) we add the literal $\neg
%y^{i'}_p$ (meaning that the horizontal line is above $p$, and hence separates
%$p$ from $p_b$); if $i\leqslant i'$ we add the literal $y^{i'}_p$.
%Furthermore, if the specifications say that there is exactly one line in $S$
%with exactly $j'$ blue points to its left we add to the clause a literal as
%follows: if $j>j'$ (that is, if $p_b$ is to the right of all the points of the
%$(i',j')$-cell) we add the literal $\neg x^{j'}_p$; if $j\leqslant j'$ we add
%the literal $x^{j'}_p$. 
Observe that this process produces clauses of size at
most two.  It may produce an empty clause, rendering the \textsc{2-SAT}
unsatisfiable, in the case where there is no line of $S$ in $H_i$ or $V_j$, but this is desirable since in this case no feasible solution
matches the specification. Note that we have constructed $O(|\mathcal
B||\mathcal R|)$ clauses in this way (at most four for each pair of a blue with
a red point). Hence, the \textsc{2-SAT} formula we have constructed has $O(n)$
variables and $O(|\mathcal B|n)$ clauses. Since \textsc{2-SAT} can be solved in
linear time, we obtain the promised running time.

To complete the proof we rely on the informal correspondence between
assignments to the \textsc{2-SAT} instance and \hvrbs solutions. In particular,
if there exists a solution that agrees with the guessed specification, this
solution can easily be translated to an assignment that satisfies the coherence
part of the \textsc{2-SAT} formula. Furthermore, for any blue point $p_b$ and
any separable and not already separated red point $p$ in a cell $C_{ij}$ that is interesting for $p_b$, the
solution must be placing at least one line going through $C_{ij}$
in a way that separates $p_b$ from $p$ (this follows from the
fact that the cell is interesting).  Hence, the corresponding \textsc{2-SAT}
clauses are also satisfied. Conversely, given an assignment to
the \textsc{2-SAT} instance, we construct an \hvrbs solution following the
informal meaning of the variables. We first note that for every blue point $p_b$, every red point is $C_{ij}$-separable for $p_b$ for at least one cell $C_{ij}$. Observe that for any cell $C_{ij}$ that is not interesting for $p_b$ and contains at least one separable point, we have that either all red points in the cell are separated from $p_b$ by lines outside the cell or all separable red points in the cell are separated from $p_b$ by the four lines running through the cell. Furthermore, if $C_{ij}$ is interesting for $p_b$, then all separable (and not already separated) red points in the cell are separated from $p_b$ because of the additional \textsc{2-SAT} clauses we
added in the second part of the construction.  \end{proof}

\section{Open problems}
The most intriguing open problem is settling the complexity of \hvrbs w.r.t. the number of lines. We conjecture it to be FPT. Other problems include the complexity of \rbs when the lines can have three different slopes and of \hvrbs in $3$-dimensions. 

\medskip

\noindent
\textbf{Acknowledgements.} The authors would like to thank Sergio Cabello and Christian Knauer for fruitful discussions.

\bibliographystyle{abbrv}

\sv{
  \newpage
  \section{Appendix}

  \subsection{Some more details on \shs}

  \subsection{Why the lower bound might be elusive}

  \subsection{Correctness of the reduction}

  \subsection{Proof of \cref{lem:atmost2}}

  Suppose for contradiction that there are three lines in $S$ which have exactly
$i$ points to the left of them: they are the lines $x=x_1$, $x=x_2$, and
$x=x_3$, with $x_1<x_2<x_3$. Then all points with $x$-coordinates between
$x_1,x_3$ are red. Hence, removing the line $x=x_2$ from the solution produces
a better, but still feasible, solution. The argument is identical for
horizontal lines.

  \subsection{Proof of \cref{lem:easy2}}

  We state the proof for horizontal lines, since the case of vertical lines is
identical. Suppose that $S$ contains the lines $y=y_1$, $y=y_2$ which both have
exactly $i$ points of $\mathcal B$ below them, and $y_1<y_2$. Let $y_1'$ be
such that the line $y=y_1'$ contains exactly $i$ blue points below it and as
many red points above it a possible. Let $y_2'$ be such that the line $y=y_2'$
contains exactly $i$ blue points below it and as many red points below it as
possible. We replace in the solution the lines $y=y_1$ and $y=y_2$ with the
lines $y=y_1'$ and $y=y_2'$. Informally, we have moved the line $y=y_1$ as far
down as possible and the line $y=y_2$ as far up as possible without changing
the number of blue points each has below it. We claim that the new solution is
still feasible because any cell contained between $y=y_1'$ and $y=y_2'$
contains only red points, while all other cells are either unchanged or have
become smaller.  

}

\end{document}